\documentclass[copyright,creativecommons]{eptcs}

\usepackage[utf8]{inputenc}
\usepackage[T1]{fontenc}
\usepackage{breakurl}             
\usepackage{amsmath}
\usepackage{amsthm}
\usepackage{amssymb}
\usepackage[ruled, noline]{algorithm2e}
\usepackage{microtype}

\newtheorem{theorem}{Theorem}
\newtheorem{lemma}[theorem]{Lemma}

\newtheorem{proposition}[theorem]{Proposition}

\newcommand{\mult}{\otimes}
\newcommand{\dist}{d}
\newcommand{\val}{\mu}
\newcommand{\appr}{\hat{\mu}}
\newcommand{\vt}{\delta}
\newcommand{\at}{\hat{\delta}}
\newcommand{\wm}{D} 
\newcommand{\awm}{F} 
\newcommand{\algidx}{s}

\newcommand{\polylog}{\operatorname{polylog}}

\DeclareMathOperator*{\argmin}{arg\,min}
\DeclareMathOperator*{\amp}{approx-min-plus}

\title{Approximating the minimum cycle mean}

\author{Krishnendu Chatterjee\thanks{Supported by the Austrian Science Fund (FWF): P23499-N23 and S11407-N23 (RiSE),
an ERC Start Grant (279307: Graph Games), and a Microsoft Faculty Fellows Award.}
\institute{IST Austria \\Institute of Science and Technology \\ Klosterneuburg, Austria}
\and 
Monika Henzinger\thanks{Supported by the Austrian Science Fund (FWF): P23499-N23, the Vienna Science and Technology Fund (WWTF) grant ICT10-002, and the University of Vienna (IK \mbox{I049-N}).}, 
Sebastian Krinninger\thanks{Supported by the Austrian Science Fund (FWF): P23499-N23 and the University of Vienna (IK \mbox{I049-N}).}, 
Veronika Loitzenbauer\thanks{Supported by the Vienna Science and Technology Fund (WWTF) grant ICT10-002.}
\institute{University of Vienna \\ Faculty of Computer Science \\ Vienna, Austria}
}

\begin{document}
\maketitle

\begin{abstract}
We consider directed graphs where each edge is labeled with an 
integer weight and study the fundamental algorithmic question of computing 
the value of a cycle with minimum mean weight.
Our contributions are twofold:
(1)~First we show that the algorithmic question is reducible in $O(n^2)$ time
to the problem of a logarithmic number of \emph{min-plus} matrix multiplications of 
$n\times n$-matrices, where $n$ is the number of vertices of the graph.
(2)~Second, when the weights are nonnegative, we present the first 
$(1 + \epsilon)$-approximation algorithm for the problem and the running time of our 
algorithm is $\widetilde{O} (n^\omega \log^3{(nW/\epsilon)} / \epsilon)$\footnote{The $\widetilde{O}$-notation hides a polylogarithmic factor.}, 
where $O (n^\omega)$ is the time required for the \emph{classic} 
$n\times n$-matrix multiplication and $W$ is the maximum value
of the weights.
\end{abstract}
\section{Introduction}

\noindent{\bf Minimum cycle mean problem.} 
We consider a fundamental graph algorithmic problem of computing the value 
of a minimum mean-weight cycle in a finite directed graph.
The input to the problem is a directed graph $G=(V,E,w)$ with a finite set 
$V$ of $n$ vertices, $E$ of $m$ edges, and a weight function $w$ that assigns 
an integer weight to every edge. 
Given a cycle $C$, the mean weight $\val(C)$ of the cycle is the ratio 
of the sum of the weights of the cycle and the number of edges in the cycle.
The algorithmic question asks to compute 
$\val=\min\{\val(C) \mid C \text{ is a cycle} \}$: the minimum cycle mean.
The minimum cycle mean problem is an important problem in 
combinatorial optimization and has a long history of algorithmic study.
An $O(n m)$-time algorithm for the problem was given by Karp~\cite{Karp1978};
and the current best known algorithm for the problem, which is over 
two decades old, by Orlin and Ahuja require $O(m \sqrt{n} \log{(n W)})$ 
time~\cite{OrlinA1992}, where $W$ is the maximum absolute value of the 
weights.

\smallskip\noindent{\bf Applications.} 
The minimum cycle mean problem is a basic combinatorial optimization 
problem that has numerous applications in network flows~\cite{AhujaMO1992}.
In the context of formal analysis of reactive systems, the performance of  
systems as well as the average resource consumption of systems is modeled
as the minimum cycle mean problem.
A reactive system is modeled as a directed graph, where vertices represent 
states of the system, edges represent transitions, and every edge is assigned a 
\emph{nonnegative} integer representing the resource consumption 
(or delay) associated with the transition.
The computation of a minimum average resource consumption behavior 
(or minimum average response time) corresponds to the computation of
the minimum cycle mean.
Several recent works model other quantitative aspects of system
analysis (such as robustness) also as the mean-weight problem (also known as \emph{mean-payoff objectives})~\cite{BGHJ09,DM10}.

\smallskip\noindent{\bf Results.} This work contains the following results.
\begin{enumerate}
\item \emph{Reduction to min-plus matrix multiplication.}
We show that the minimum cycle mean problem is reducible in $O(n^2)$ 
time to the problem of a logarithmic number of min-plus matrix multiplications of 
$n\times n$-matrices, where $n$ is the number of vertices of the graph.
Our result implies that algorithmic improvements for min-plus matrix
multiplication will carry over to the minimum cycle mean problem
with a logarithmic multiplicative factor and $O(n^2)$ additive
factor in the running time.

\item \emph{Faster approximation algorithm.}
When the weights are nonnegative, we present the first 
$(1 + \epsilon)$-approx\-imation algorithm for the problem that outputs 
$\appr$ such that 
$\val \leq \appr \leq (1+\epsilon) \val$ and the running time of 
our algorithm is $\widetilde{O} (n^\omega \log^3{(nW/\epsilon)} / \epsilon)$. 
As usual, the $\widetilde{O}$-notation is used to ``hide'' a polylogarithmic factor, i.e., $\widetilde{O} (T (n, m, W)) = O (T (n, m, W) \cdot \polylog(n))$, and $O (n^\omega)$ is the time required for the \emph{classic} 
$n\times n$-matrix multiplication.
The current best known bound for $\omega$ is $\omega < 2.3727$.
The worst case complexity of the current best known algorithm
for the minimum cycle mean problem is $O(m \sqrt{n} \log{(n W)})$~\cite{OrlinA1992}, 
which could be as bad as $O(n^{2.5} \log{(n W)})$. Thus for $(1 + \epsilon)$-approximation 
our algorithm provides better dependence in $n$. 
Note that in applications related to systems analysis the weights are always
nonnegative (they represent resource consumption, delays, etc);
and the weights are typically small, whereas the state space of 
the system is large.
Moreover, due to imprecision in modeling, approximations in weights
are already introduced during the modeling phase.
Hence $(1 + \epsilon)$-approximation of the minimum cycle mean problem 
with small weights and large graphs is a very relevant algorithmic problem 
for reactive system analysis, and we improve the long-standing 
complexity of the problem.

The key technique that we use to obtain the approximation algorithm is a
combination of the value iteration algorithm for the minimum cycle mean 
problem, and a technique used for an approximation algorithm for 
all-pair shortest path problem for directed graphs.
Table~\ref{tab:running_time_comparison} compares our algorithm with the 
asymptotically fastest existing algorithms.
\end{enumerate}

\begin{table}[htbp]
\centering
\begin{tabular}{c | c | c | c}
\textbf{Reference} & \textbf{Running time} & \textbf{Approximation} & \textbf{Range} \\
\hline
Karp~\cite{Karp1978} & $ O (m n) $ & exact & $ [-W, W] $ \\
Orlin and Ahuja~\cite{OrlinA1992} & $ O (m \sqrt{n} \log{(n W)}) $ & exact & $ [-W, W] \cap \mathbb{Z} $ \\
Sankowski~\cite{Sankowski2005} (implicit) & $ \widetilde{O} (W n^\omega \log{(n W)}) $ & exact & $ [-W, W] \cap \mathbb{Z} $ \\
Butkovic and Cuninghame-Green~\cite{ButkovicG1992} & $ O (n^2) $ & exact & $ \{ 0, 1 \} $ \\
This paper & $ \widetilde{O} (n^\omega \log^3{(nW/\epsilon)} / \epsilon) $ & $ 1+\epsilon $ & $ [0, W] \cap \mathbb{Z} $ \\
\end{tabular}
\caption{Current fastest asymptotic running times for computing the minimum cycle mean}
\label{tab:running_time_comparison}
\end{table}

\subsection{Related work}

The minimum cycle mean problem is basically equivalent to solving a deterministic Markov decision process (MDP)~\cite{ZwickP1996}.
The latter can also be seen as a single-player mean-payoff game~\cite{EhrenfeuchtM1979, GurvichKK1988, ZwickP1996}.
We distinguish two types of algorithms: algorithms that are independent of the weights of the graph and algorithms that depend on the weights in some way.
By $ W $ we denote the maximum absolute edge weight of the graph.

\smallskip\noindent{\bf Algorithms independent of weights.} 
The classic algorithm of Karp~\cite{Karp1978} uses a dynamic programming approach to find the minimum cycle mean and runs in time $ O (m n) $.
The main drawback of Karp's algorithm is that its best-case and worst-case running times are the same.
The algorithms of Hartmann and Orlin~\cite{HartmannO1993} and of Dasdan and Gupta~\cite{DasdanG1998} address this issue, but also have a worst-case complexity of $ O(m n) $.
By solving the more general parametric shortest path problem, Karp and Orlin~\cite{KarpO1981} can compute the minimum cycle mean in time $ O (m n \log{n}) $.
Young, Tarjan, and Orlin~\cite{YoungTO1991} improve this running time to $ O (m n + n^2 \log{n}) $.

A well known algorithm for solving MDPs is the value iteration algorithm.
In each iteration this algorithm spends time $ O (m) $ and in total it performs $ O(nW) $ iterations.
Madani~\cite{Madani2002} showed that, for \emph{deterministic} MDPs (i.e., weighted graphs for which we want to find the minimum cycle mean), a certain variant of the value iteration algorithm ``converges'' to the optimal cycle after $ O (n^2) $ iterations which gives a running time of $ O(m n^2) $ for computing the minimum cycle mean.
Using similar ideas he also obtains a running time of $ O(m n) $.
Howard's policy iteration algorithm is another well-known algorithm for solving MDPs~\cite{Howard1960}.
The complexity of this algorithm for deterministic MDPs is unresolved.
Recently, Hansen and Zwick~\cite{HansenZ2010} provided a class of weighted graphs on which Howard's algorithm performs $ \Omega (n^2) $ iterations where each iteration takes time $ O (m) $.

\smallskip\noindent\textbf{Algorithms depending on weights.}
If a graph is complete and has only two different edge weights, then the minimum cycle mean problem problem can be solved in time $ O(n^2) $ because the matrix of its weights is bivalent~\cite{ButkovicG1992}.

Another approach is to use the connection to the problem of detecting a negative cycle.
Lawler~\cite{Lawler1976} gave a reduction for finding the minimum cycle mean that performs $ O(\log (n W)) $ calls to a negative cycle detection algorithm.
The main idea is to perform binary search on the minimum cycle mean.
In each search step the negative cycle detection algorithm is run on a graph with modified edge weights.
Orlin and Ahuja~\cite{OrlinA1992} extend this idea by the approximate binary search technique~\cite{Zemel1987}.
By combining approximate binary search with their scaling algorithm for the assignment problem they can compute the minimum cycle mean in time $ O(m \sqrt{n} \log{n W}) $.

Note that in its full generality the single-source shortest paths problem (SSSP) also demands the detection of a negative cycle reachable from the source vertex.\footnote{Remember that, for example, Dijkstra's algorithm for computing single-source shortest paths requires non-negative edge weights which excludes the possibility of negative cycles.}
Therefore it is also possible to reduce the minimum cycle mean problem to SSSP.
The best time bounds on SSSP are as follows.
Goldberg's scaling algorithm~\cite{Goldberg1995} solves the SSSP problem (and therefore also the negative cycle detection problem) in time $ O (m \sqrt{n} \log{W}) $.
McCormick~\cite{McCormick1993} combines approximate binary search with Goldberg's scaling algorithm to find the minimum cycle mean in time $ O (m \sqrt{n} \log{n W}) $, which matches the result of Orlin and Ahuja~\cite{OrlinA1992}.
Sankowski's matrix multiplication based algorithm~\cite{Sankowski2005} solves the SSSP problem in time $ \widetilde O (W n^\omega) $.
By combining binary search with Sankowski's algorithm, the minimum cycle mean problem can be solved in time $ \widetilde O (W n^\omega \log{nW}) $

\smallskip\noindent\textbf{Approximation of minimum cycle mean.}
To the best of our knowledge, our algorithm is the first approximation algorithm specifically for the minimum cycle mean problem.
There are both additive and multiplicative fully polynomial-time approximation schemes for solving mean-payoff games~\cite{RothBKM2010, BorosEFGMM2011}, which is a more general problem.
Note that in contrast to finding the minimum cycle mean it is not known whether the exact solution to a mean-payoff game can be computed in polynomial time.
The results of~\cite{RothBKM2010} and~\cite{BorosEFGMM2011} are obtained by reductions to a pseudo-polynomial algorithm for solving mean-payoff games.
In the case of the minimum cycle mean problem, these reductions do not provide an improvement over the current fastest exact algorithms mentioned above.

\smallskip\noindent\textbf{Min-plus matrix multiplication.}
Our approach reduces the problem of finding the minimum cycle mean to computing the (approximate) min-plus product of matrices.
The naive algorithm for computing the min-plus product of two matrices runs in time $ O(n^3) $.
To date, no algorithm is known that runs in time $ O(n^{3-\alpha}) $ for some $ \alpha > 0 $, so-called \emph{truly subcubic} time.
This is in contrast to classic matrix multiplication that can be done in time $ O (n^\omega) $ where the current best bound on $ \omega $ is $ \omega < 2.3727 $~\cite{Williams2012}.
Moreover, Williams and Williams~\cite{WilliamsW2010} showed that computing the min-plus product is computationally equivalent to a series of problems including all-pairs shortest paths and negative triangle detection.
This provides evidence for the hardness of these problems.
Still, the running time of $ O(n^3) $ for the min-plus product can be improved by logarithmic factors and by assuming small integer entries.

Fredman~\cite{Fredman1976} gave an algorithm for computing the min-plus product with a slightly subcubic running time of $ O(n^3 (\log\log{n})^{1/3} / (\log{n})^{1/3}) $.
This algorithm is ``purely combinatorial'', i.e., it does not rely on fast algorithms for classic matrix multiplication.
After a long line of improvements, the current fastest such algorithm by Chan~\cite{Chan2010} runs in time $ O(n^3 (\log{\log{n}})^3 / (\log{n})^2) $.

A different approach for computing the min-plus product of two integer matrix is to reduce the problem to classic matrix multiplication~\cite{Yuval1976}.
In this way, the min-plus product can be computed in time $ O (M n^\omega \log{M}) $ which is pseudo-polynomial since $ M $ is the maximum absolute integer entry~\cite{AlonGM1997}.
This observation was used by Alon, Galil, and Margalit~\cite{AlonGM1997} and Zwick~\cite{Zwick2002} to obtain faster all-pairs shortest paths algorithms in directed graphs for the case of small integer edge weights.
Zwick also combines this min-plus matrix multiplication algorithm with an adaptive scaling technique that allows to compute $ (1+\epsilon) $-approximate all-pairs shortest paths in graphs with non-negative edge weights.
Our approach of finding the minimum cycle mean extensively uses this technique.

\section{Definitions}

Throughout this paper we let $ G = (V, E, w) $ be a weighted directed graph with a finite set of vertices $ V $ and a set of edges $ E $ such that every vertex has at least one outgoing edge.
The weight function $ w $ assigns a nonnegative integer weight to every edge.
We denote by $ n $ the number of vertices of $ G $ and by $ m $ the number of edges of $ G $.
Note that $ m \geq n $ because every vertex has at least one outgoing edge.

A \emph{path} is a finite sequence of edges $ P = (e_1, \ldots, e_k) $ such that for all consecutive edges $ e_i = (x_i, y_i) $ and $ e_{i+1} = (x_{i+1}, y_{i+1}) $ of $ P $ we have $ y_i = x_{i+1} $.
Note that edges may be repeated on a path, we \emph{do not} only consider simple paths.
A \emph{cycle} is a path in which the start vertex and the end vertex are the same.
The \emph{length of a path $ P $} is the number of edges of $ P $.
The \emph{weight of a path $ P = (e_1, \ldots, e_k) $}, denoted by $ w (P) $ is the sum of its edge weights, i.e. $ w(P) = \sum_{1 \leq i \leq k} w (e_i) $.

The \emph{minimum cycle mean} of $ G $ is the minimum mean weight of any cycle in $ G $.
For every vertex $ x $ we denote by $ \val (x) $ the value of the minimum mean-weight cycle reachable from $ x $.
The minimum cycle mean of $ G $ is simply the minimum $ \val (x) $ over all vertices $ x $.
For every vertex $ x $ and every integer $ t \geq 1 $ we denote by $ \vt_t (x) $ the minimum weight of all paths starting at $ x $ that have length $ t $, i.e., consist of exactly $ t $ edges.
For all pairs of vertices $ x $ and $ y $ and every integer $ t \geq 1 $ we denote by $ \dist_t (x, y) $ the minimum weight of all paths of length $ t $ from $ x $ to $ y $.
If no such path exists we set $ \dist_t (x, y) = \infty $.

For every matrix $ A $ we denote by $ A[i, j] $ the entry at the $i$-th row and the $j$-th column of $ A $.
We only consider $ n \times n $ matrices with integer entries, where $ n $ is the size of the graph.
We assume that the vertices of $ G $ are numbered consecutively from $ 1 $ to $ n $, which allows us to use $ A[x, y] $ to refer to the entry of $ A $ belonging to vertices $ x $ and $ y $.
The \emph{weight matrix $ \wm $ of $ G $} is the matrix containing the weights of $ G $.
For all pairs of vertices $ x $ and $ y $ we set $ \wm [x, y] = w (x, y) $ if the graph contains the edge $ (x, y) $ and $ \wm [x, y] = \infty $ otherwise.

We denote the \emph{min-plus product} of two matrices $ A $ and $ B $ by $ A \mult B $.
The min-plus product is defined as follows.
If $ C = A \mult B $, then for all indices $ 1 \leq i, j \leq n $ we have $ C[i,j] = \min_{1 \leq k \leq n} (A[i, k] + B[k, j]) $.
We denote by $ A^t $ the $ t $-th power of the matrix $ A $.
Formally, we set $ A^1 = A $ and $ A^{t+1} = A \mult A^t $ for $ t \geq 1 $.
We denote by $ \omega $ the exponent of classic matrix multiplication, i.e., the product of two $ n \times n $ matrices can be computed in time $ O (n^{\omega}) $.
The current best bound on $ \omega $ is $ \omega < 2.3727 $~\cite{Williams2012}.

\section{Reduction of minimum cycle mean to min-plus matrix multiplication}\label{sec:reduction}

In the following we explain the main idea of our approach which is to use min-plus matrix multiplication to find the minimum cycle mean.
The well-known value iteration algorithm uses a dynamic programming approach to compute in each iteration a value for every vertex $ x $ from the values of the previous iteration.
After $ t $ iterations, the value computed by the value iteration algorithm for vertex $ x $ is equal to $ \vt_t (x) $, the minimum weight of all paths with length $ t $ starting at $ x $.
We are actually interested in $ \val (x) $, the value of the minimum mean-weight cycle reachable from $ x $.
It is well known that $ \lim_{t \to \infty} \vt_t (x) / t = \val (x) $ and that the value of $ \val (x) $ can be computed from $ \vt_t (x) $ if $ t $ is large enough~$(t = O (n^3 W) )$~\cite{ZwickP1996}.\footnote{Specifically, for $ t = 4 n^3 W$ the unique number in $\left(\vt_t (x) / t - 1/[2n(n-1)], \vt_t (x) / t + 1 / [2n(n-1)]\right) \cap \mathbb{Q}$ that has a denominator of at most $n$ is equal to $\val(x)$~\cite{ZwickP1996}.}
Thus, one possibility to determine $ \val (x) $ is the following: first, compute $ \vt_t (x) $ for $ t $ large enough with the value iteration algorithm and then compute $ \val (x) $ from $ \vt_t (x) $.
However, using the value iteration algorithm for computing $ \vt_t (x) $ is expensive because its running time is linear in $ t $ and thus pseudo-polynomial.

Our idea is to compute $ \vt_t (x) $ for a large value of $ t $ by using fast matrix multiplication instead of the value iteration algorithm.
We will compute the matrix $ \wm^t $, the $t$-th power of the weight matrix (using min-plus matrix multiplication).
The matrix $ \wm^t $ contains the value of the minimum-weight path of length exactly $ t $ for all pairs of vertices.
Given $ \wm^t $, we can determine the value $ \vt_t (x) $ for every vertex $ x $ by finding the minimum entry in the row of $ \wm^t $ corresponding to $ x $.
\begin{proposition}\label{prop:rowmin}
For every $ t \geq 1 $ and all vertices $ x $ and $ y $ we have (i) $ \dist_t (x, y) = \wm^t [x, y] $ and (ii) $ \vt_t (x) = \min_{y \in V} \wm^t [x, y] $.
\end{proposition}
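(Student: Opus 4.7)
The plan is to prove part (i) by induction on $t$, and then derive part (ii) as an immediate consequence.

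For the base case $t = 1$, we have $\wm^1 = \wm$ by definition, so $\wm^1[x,y] = w(x,y)$ if $(x,y) \in E$ and $\infty$ otherwise. A path of length $1$ from $x$ to $y$ exists iff $(x,y) \in E$, in which case its weight is exactly $w(x,y)$; hence $\dist_1(x,y) = \wm^1[x,y]$.

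For the inductive step, assume the claim holds for some $t \geq 1$. Using the definition $\wm^{t+1} = \wm \mult \wm^t$ together with the definition of the min-plus product, we have
\[
\wm^{t+1}[x,y] \;=\; \min_{z \in V}\bigl(\wm[x,z] + \wm^t[z,y]\bigr) \;=\; \min_{z \in V}\bigl(\wm[x,z] + \dist_t(z,y)\bigr),
\]
where the second equality uses the inductive hypothesis. To match this with $\dist_{t+1}(x,y)$, I would argue by two inequalities: any path of length $t+1$ from $x$ to $y$ decomposes uniquely into a first edge $(x,z)$ and a path of length $t$ from $z$ to $y$, so $\dist_{t+1}(x,y) \leq \wm[x,z] + \dist_t(z,y)$ for every $z$, and conversely given the optimal path of length $t+1$ from $x$ to $y$ with first edge $(x,z^*)$, its weight equals $w(x,z^*) + \dist_t(z^*,y)$ (else we could improve it). Taking the minimum over $z$ yields $\dist_{t+1}(x,y) = \min_{z}(\wm[x,z] + \dist_t(z,y)) = \wm^{t+1}[x,y]$. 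If no path of length $t+1$ exists from $x$ to $y$, both sides are $\infty$ under the usual convention that $\infty + a = \infty$.

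For part (ii), observe that every path of length $t$ starting at $x$ ends at some vertex $y \in V$, and among paths of length $t$ from $x$ to $y$ the minimum weight is $\dist_t(x,y)$ by definition. Therefore
\[
\vt_t(x) \;=\; \min_{y \in V} \dist_t(x,y) \;=\; \min_{y \in V} \wm^t[x,y],
\]
where the last equality uses part (i). The only subtle point in the argument is the handling of the $\infty$ entries in the inductive step, but since every vertex has at least one outgoing edge (by our standing assumption on $G$), at least one finite entry exists in each row of $\wm^t$ for every $t$, so $\vt_t(x)$ is always well-defined and finite.
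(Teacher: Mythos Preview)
Your proof is correct and follows essentially the same approach as the paper: induction on $t$ for part~(i), decomposing a length-$(t{+}1)$ path into its first edge and a length-$t$ tail, and then deriving part~(ii) by minimizing over the endpoint $y$. One small expository point: the sentence ``any path of length $t+1$ \ldots\ decomposes \ldots, so $\dist_{t+1}(x,y) \leq \wm[x,z] + \dist_t(z,y)$ for every $z$'' has the justification swapped with the next clause---the decomposition of an arbitrary path yields the lower bound on $\dist_{t+1}$, while the upper bound for each $z$ comes from \emph{constructing} a path via the edge $(x,z)$ followed by an optimal length-$t$ path---but both directions are present and the conclusion is sound.
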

\begin{proof}
We give the proof for the sake of completeness.
The claim $ \dist_t (x, y) = \wm^t [x, y] $ follows from a simple induction on $ t $.
If $ t = 1 $, then clearly the minimal-weight path of length $ 1 $ from $ x $ to $ y $ is the edge from $ x $ to $ y $ if it exists, otherwise $ \dist_t (x, y) = \infty $.
If $ t \geq 1 $, then a minimal-weight path of length $ t $ from $ x $ to $ y $ (if it exists) consists of some outgoing edge of $ e = (x, z) $ as its first edge and then a minimal-weight path of length $ t - 1 $ from $ z $ to $ y $.
We therefore have $ \dist_t (x, y) = \min_{(x, z) \in E} w (x, z) + \dist_{t-1} (z, y) $.
By the definition of the weight matrix and the induction hypothesis we get $ \dist_t (x, y) = \min_{z \in V} \wm [x, z] + \wm^{t-1} [z, y] $.
Therefore the matrix $ \wm \mult \wm^{t-1} = \wm^t $ contains the value of $ \dist_t (x, y) $ for every pair of vertices $ x $ and $ y $.

For the second claim, $ \vt_t (x) = \min_{y \in V} \wm^t [x, y] $, observe that by the definition of $ \vt_t (x) $ we obviously have $ \vt_t (x) = \min_{y \in V} \dist_t (x, y) $ because the minimal-weight path of length $ t $ starting at $ x $ has \emph{some} node $ y $ as its end point.
\end{proof}

Using this approach, the main question is how fast the matrix $ \wm^t $ can be computed.
The most important observation is that $ \wm^t $ (and therefore also $ \vt_t (x) $) can be computed by repeated squaring with only $ O (\log t) $ min-plus matrix multiplications.
This is different from the value iteration algorithm, where $ t $ iterations are necessary to compute $ \vt_t (x) $.
\begin{proposition}\label{prop:logt}
For every $ t \geq 1 $ we have $ \wm^{2t} = \wm^t \mult \wm^t $.
Therefore the matrix $ \wm^t $ can be computed with $ O (\log t) $ many min-plus matrix multiplications.
\end{proposition}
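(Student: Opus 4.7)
The statement has two parts: the identity $\wm^{2t} = \wm^t \mult \wm^t$ for all $t \geq 1$, and the logarithmic-multiplication consequence. The plan is to handle them in that order, with the first yielding the second almost immediately via repeated squaring.

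For the identity, my plan is to argue combinatorially, leveraging Proposition~\ref{prop:rowmin} which already identifies $\wm^t[x,y]$ with $\dist_t(x,y)$. Any path of length $2t$ from $x$ to $y$ splits at its midpoint into an initial sub-path of length $t$ from $x$ to some vertex $z$ and a terminal sub-path of length $t$ from $z$ to $y$; conversely, concatenating any such pair yields a path of length $2t$. Minimizing the sum of the two sub-path weights over the choice of $z$ therefore gives
\[
\dist_{2t}(x,y) \;=\; \min_{z \in V}\bigl(\dist_t(x,z) + \dist_t(z,y)\bigr).
\]
Translating the right-hand side back through Proposition~\ref{prop:rowmin} and the definition of $\mult$ yields $\wm^{2t}[x,y] = (\wm^t \mult \wm^t)[x,y]$. (An equivalent purely algebraic route would be to establish associativity of $\mult$ by unfolding the definition and then use a one-line induction from $\wm^{t+1} = \wm \mult \wm^t$ to get $\wm^{a+b} = \wm^a \mult \wm^b$; I would pick whichever reads more cleanly.)

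For the second claim, my plan is the standard repeated-squaring argument. Let $k = \lfloor \log_2 t \rfloor$. First compute the powers $\wm^{2^0}, \wm^{2^1}, \ldots, \wm^{2^k}$ successively via $\wm^{2^{i+1}} = \wm^{2^i} \mult \wm^{2^i}$, using the identity from the first part; this costs $k$ min-plus matrix multiplications. Then write $t$ in binary as $t = \sum_{i \in S} 2^i$ with $S \subseteq \{0, \ldots, k\}$, and obtain $\wm^t$ as the min-plus product $\bigl(\!\mult_{i \in S}\wm^{2^i}\bigr)$; since $|S| \le k+1$, this requires at most $k$ further multiplications. In total we use $O(\log t)$ min-plus matrix multiplications, as claimed.

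There is no real obstacle here: both parts are routine once Proposition~\ref{prop:rowmin} is in hand. The only point that needs a sentence of care is the equivalence between the combinatorial midpoint decomposition and the algebraic definition of $\mult$, and the bookkeeping of the two $O(\log t)$ contributions (powers and combination) in the repeated-squaring phase.
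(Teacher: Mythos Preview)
Your proposal is correct and follows essentially the same approach as the paper: establish the squaring identity and then apply repeated squaring with a binary decomposition of $t$. The only minor difference is that you justify $\wm^{2t} = \wm^t \mult \wm^t$ combinatorially via the midpoint decomposition of paths (through Proposition~\ref{prop:rowmin}), whereas the paper simply invokes associativity of the min-plus product; both are equally valid and you even mention the associativity route as an alternative.
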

\begin{proof}
We give the proof for the sake of completeness.
It can easily be verified that the min-plus matrix product is associative~\cite{AhoHU1974} and therefore $ \wm^{2t} = \wm^t \mult \wm^t $.
Therefore, if $ t $ is a power of two, we can compute $ \wm^{t} $ with $ \log t $ min-plus matrix multiplications.
If $ t $ is not a power of two, we can decompose $ \wm^t $ into $ \wm^t = \wm^{t_1} \mult \ldots \mult \wm^{t_k} $ where each $ t_i \leq t $ (for $ 1 \leq i \leq k $) is a power of two and $ k \leq \lceil \log t \rceil $.
By storing intermediate results, we can compute $ \wm^{2^i} $ for every $ 0 \leq i \leq \lceil \log t \rceil $ with $ \lceil \log t \rceil $ min-plus matrix multiplications.
Using the decomposition above, we have to multiply at most $ \lceil \log t \rceil $ such matrices to obtain $ \wm^t $.
Therefore the total number of min-plus matrix multiplications needed for computing $ \wm^t $ is $ O (\log t) $.
\end{proof}

The running time of this algorithm depends on the time needed for computing the min-plus product of two integer matrices.
This running time will usually depend on the two parameters $ n $ and $ M $ where $ n $ is the size of the $ n \times n $ matrices to be multiplied (in our case this is equal to the number of vertices of the graph) and the parameter $ M $ denotes the maximum absolute integer entry in the matrices to be multiplied.
When we multiply the matrix $ \wm $ by itself to obtain $ \wm^2 $, we have $ M = W $, where $ W $ is the maximum absolute edge weight.
However, $ M $ increases with every multiplication and in general, we can bound the maximum absolute integer entry of the matrix $ \wm^t $ only by $ M = t W $.
Note that $O(n^2)$ operations are necessary to extract the minimum cycle mean $\val(x)$ for all vertices $x$ from the matrix $\wm^t$.

\begin{theorem}
If the min-plus product of two $ n \times n $ matrices with entries in $ \{ -M, \ldots, -1, 0, 1, \ldots, M, \infty \} $ can be computed in time $ T (n, M) $, then the minimum cycle mean problem can be solved in time $ T (n, t W) \log t$ where $ t = O (n^3 W) $.\footnote{Note that necessarily $ T (n, M) = \Omega (n^2) $ because the result matrix has $ n^2 $ entries that have to be written.}
\end{theorem}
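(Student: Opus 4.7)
The plan is to assemble the three ingredients already established in the excerpt: Proposition~\ref{prop:rowmin} (which lets us read off $\vt_t(x)$ from a row of $\wm^t$), Proposition~\ref{prop:logt} (which computes $\wm^t$ with $O(\log t)$ min-plus multiplications via repeated squaring), and the fact cited from~\cite{ZwickP1996} (which recovers $\val(x)$ exactly from $\vt_t(x)$ for $t$ chosen as in the footnote).

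First I would fix $t = 4 n^3 W$, which is still $O(n^3 W)$, so that the footnoted uniqueness property applies: $\val(x)$ is the unique rational with denominator at most $n$ in the open interval $\bigl(\vt_t(x)/t - 1/[2n(n-1)],\, \vt_t(x)/t + 1/[2n(n-1)]\bigr)$. Next, using Proposition~\ref{prop:logt}, I would compute $\wm^t$ by repeated squaring together with at most $O(\log t)$ extra multiplications to handle the binary expansion of $t$. Each of these multiplications is one call to the min-plus product, and the crucial entry bound is that every finite entry of $\wm^{s}$ for $s \leq t$ is the weight of a path of length $s$ and therefore lies in $[-sW, sW] \subseteq [-tW, tW]$; thus each call fits the hypothesis with parameter $M = tW$ and costs $T(n, tW)$. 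This yields total cost $O(T(n, tW) \log t)$ for producing $\wm^t$.

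Then, by Proposition~\ref{prop:rowmin}, I would read off $\vt_t(x) = \min_{y \in V} \wm^t[x, y]$ for every $x$ by scanning each row of $\wm^t$, which costs $O(n^2)$ in total. Finally, for each $x$ I would recover $\val(x)$ from $\vt_t(x)$ by locating the unique rational with denominator at most $n$ in the required interval; this can be done in $\polylog(n)$ time per vertex using the standard Stern--Brocot / continued-fraction best-rational-approximation procedure, contributing a lower-order term. The minimum cycle mean of $G$ is then $\min_x \val(x)$. Since $T(n, M) = \Omega(n^2)$, the $O(n^2)$ extraction work is absorbed into the matrix-multiplication cost, giving the claimed bound of $T(n, tW) \log t$.

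The only nontrivial step is the exact recovery of $\val(x)$ from $\vt_t(x)$, and that is handled entirely by the cited result from~\cite{ZwickP1996} combined with a standard rational-reconstruction subroutine; everything else is bookkeeping: checking that intermediate entries stay bounded by $tW$ (so that $T(n, tW)$ is a legitimate upper bound for each of the $O(\log t)$ multiplications) and checking that the decomposition of $\wm^t$ from Proposition~\ref{prop:logt} does not inflate the number of min-plus products beyond $O(\log t)$.
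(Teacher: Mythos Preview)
Your proposal is correct and follows exactly the approach the paper intends: the paper does not give a separate proof of this theorem but treats it as an immediate consequence of Propositions~\ref{prop:rowmin} and~\ref{prop:logt}, the cited recovery result from~\cite{ZwickP1996}, and the observation that all intermediate matrices $\wm^s$ have entries bounded by $tW$. Your write-up simply makes these implicit steps explicit (and adds the Stern--Brocot detail for rational reconstruction, which the paper omits), so there is nothing substantively different between the two.
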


Unfortunately, the approach outlined above does not immediately improve the running time for the minimum cycle mean problem because min-plus matrix multiplication currently cannot be done fast enough.
However, our approach is still useful for solving the minimum cycle mean problem \emph{approximately} because approximate min-plus matrix multiplication can be done faster than its exact counterpart.

\section{Approximation algorithm}
In this section we design an algorithm that computes an approximation of the minimum cycle mean in graphs with nonnegative integer edge weights.
It follows the approach of reducing the minimum cycle mean problem to min-plus matrix multiplication outlined in Section~\ref{sec:reduction}.
The key to our algorithm is a fast procedure for computing the min-plus product of two integer matrices approximately.
We will proceed as follows.
First, we explain how to compute an approximation $ \awm $ of $ \wm^t $, the $t$-th power of the weight matrix $ \wm $.
From this we easily get, for every vertex $ x $, an approximation $ \at_t (x) $ of $ \vt_t (x) $, the minimum-weight of all paths of length $ t $ starting at $ x $.
We then argue that for $ t $ large enough (in particular $ t = O (n^2 W / \epsilon) $), the value $ \vt_t (x) / t $ is an approximation of $ \val (x) $, the minimum cycle mean of cycles reachable from $ x $.
By combining both approximations we can show that $ \at_t (x) / t $ is an approximation of $ \val (x) $.
Thus, the main idea of our algorithm is to compute an approximation of $ \wm^t $ for a large enough $ t $.

\subsection{Computing an approximation of $ \wm^t $}

Our first goal is to compute an approximation of the matrix $ \wm^t $, the $t$-th power of the weight matrix $ \wm $, given $ t \geq 1 $.
Zwick provides the following algorithm for approximate min-plus matrix 
multiplication.
\begin{theorem}[Zwick \cite{Zwick2002}]\label{th:amp}
	Let $A$ and $B$ be two $n \times n$ matrices with integer entries in $[0,M]$
	and let $C := A \mult B$. Let $R \ge \log n$ be a power of two. The algorithm 
	$\amp(A, B, M, R)$ computes the approximate min-plus product $\overline{C}$ of $A$ 
	and $B$ in time\footnote{The running time of $\amp$ is given by $O(n^{\omega}\log M)$
	times the time needed to multiply two $O(R \log n)$-bit integers. With the 
	Schönhage-Strassen algorithm for large integer multiplication, two $k$-bit integers can be multiplied in 
	$O(k \log k \log \log k)$ time, which gives a running time of
	$O(n^{\omega} R \log (M) \log (n) \log(R \log n) \log\log(R \log n))$. 
	This can be bounded by the running time given in Theorem~\ref{th:amp} 
	if $ R \geq \log n $, which will always be the case in the following.}
	$O(n^{\omega} R \log (M) \log^2 (R) \log (n))$ such that for every 
	$1 \le i, j \le n$ it holds that $C[i,j] \le \overline{C}[i,j] \le (1 + 4/R) C[i,j]$.
\end{theorem}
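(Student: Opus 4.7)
The plan is to follow Zwick's reduction of approximate min-plus matrix multiplication to a logarithmic number of classic matrix multiplications, built on top of the exact Alon--Galil--Margalit encoding. First I would recall the exact subroutine: if $A$ and $B$ carry nonnegative integer entries bounded by $K$ (with $\infty$ allowed), encode each finite entry $a$ as $(n+1)^{K-a}$ and $\infty$ as $0$, obtaining integer matrices whose classic product $\widetilde{C}$ is computed with $O(n^\omega)$ arithmetic operations. Each entry of the product is a sum of $n$ powers of $n+1$, so written in base $n+1$ the position of the leading nonzero digit of $\widetilde{C}[i,j]$ is exactly $2K - \min_k(A[i,k]+B[k,j])$, from which the exact min-plus product $A \mult B$ is read off entry-wise. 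Since the integers involved have $O(K \log n)$ bits, this subroutine performs $O(n^\omega)$ ring operations on $O(K \log n)$-bit numbers.

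Next, to compress entries in $[0,M]$ into entries of size $O(R)$ I would bucket by scale. For each $\ell = 0, 1, \ldots, \lceil \log M \rceil$ define
\[
A^{(\ell)}[i,k] \;=\; \begin{cases} \lceil A[i,k] \cdot R / 2^\ell \rceil & \text{if } A[i,k] \le 2^\ell, \\ \infty & \text{otherwise,} \end{cases}
\]
and symmetrically $B^{(\ell)}$. Run the exact subroutine with $K = R$ on each pair to obtain $C^{(\ell)} = A^{(\ell)} \mult B^{(\ell)}$, and set $\overline{C}[i,j] = \min_\ell C^{(\ell)}[i,j] \cdot 2^\ell / R$. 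For the lower bound, note that restricting the index $k$ to those with $A[i,k], B[k,j] \le 2^\ell$ and rounding both summands \emph{up} only inflates each sum, so $C^{(\ell)}[i,j] \ge C[i,j] \cdot R / 2^\ell$, which gives $\overline{C}[i,j] \ge C[i,j]$. For the upper bound, let $k^\ast$ attain $C[i,j]$ and let $\ell^\ast$ be minimal with $A[i,k^\ast], B[k^\ast,j] \le 2^{\ell^\ast}$; then
\[
C^{(\ell^\ast)}[i,j] \;\le\; \lceil A[i,k^\ast] R / 2^{\ell^\ast} \rceil + \lceil B[k^\ast,j] R / 2^{\ell^\ast} \rceil \;\le\; C[i,j] \cdot R / 2^{\ell^\ast} + 2,
\]
so $\overline{C}[i,j] \le C[i,j] + 2 \cdot 2^{\ell^\ast} / R$. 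When $\ell^\ast \ge 1$, minimality forces $C[i,j] \ge 2^{\ell^\ast - 1}$, hence $2^{\ell^\ast} / R \le 2\,C[i,j]/R$ and $\overline{C}[i,j] \le (1 + 4/R)\,C[i,j]$; when $\ell^\ast = 0$ the ceilings are vacuous and $\overline{C}[i,j] = C[i,j]$.

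Finally I would assemble the running time: $O(\log M)$ buckets, each a classic matrix product on $O(R \log n)$-bit integers, which by Sch\"{o}nhage--Strassen costs $O(n^\omega)$ operations on $k$-bit numbers at $O(k \log k \log\log k)$ per operation, yielding the footnote's $O(n^\omega R \log M \log n \log(R \log n) \log\log(R \log n))$; under $R \ge \log n$ this collapses into $O(n^\omega R \log(M) \log^2(R) \log(n))$. The main obstacle is the approximation bookkeeping in the middle paragraph: one has to choose the exact rounding convention so that the leading constant comes out to $4$ rather than, say, $8$, and simultaneously ensure that no scale $\ell$ can produce a spuriously small $C^{(\ell)}[i,j]$ via some index $k$ whose preimage $A[i,k]$ or $B[k,j]$ exceeds $2^\ell$. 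The $\infty$-sentinel in the bucketing is designed precisely to prevent that; everything else is routine.
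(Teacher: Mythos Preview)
The paper does not actually prove Theorem~\ref{th:amp}: it is quoted verbatim from Zwick~\cite{Zwick2002} and used as a black box, with only the footnote explaining how the stated running time follows from the Sch\"onhage--Strassen bound. So there is no in-paper proof to compare your proposal against.

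That said, your sketch is a faithful reconstruction of Zwick's original argument: the Alon--Galil--Margalit encoding to reduce bounded-entry min-plus products to classic products over $O(K\log n)$-bit integers, the geometric bucketing by scale $2^\ell$ with $\infty$-sentinels to cap the effective entry size at~$R$, and the standard rounding analysis that pins the multiplicative error at $1+4/R$ via the minimality of~$\ell^\ast$. The running-time collapse under $R\ge\log n$ is exactly the simplification the footnote alludes to. Nothing is missing; your write-up would serve as a self-contained proof of the cited result if the paper had chosen to include one.
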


We now give a modification (see Algorithm~\ref{alg}) of Zwick's algorithm for 
approximate shortest paths~\cite{Zwick2002} such that the algorithm computes 
a $(1 + \epsilon)$-approximation $\awm$ of $\wm^t$ 
when $t$ is a power of two such that for $1 \le i, j \le n$ we have $\wm^t[i,j] \le \awm[i,j]
\le (1 + \epsilon) \wm^t[i,j]$.
Just as we can compute $\wm^t$ exactly with $\log t$ min-plus matrix
multiplications, the algorithm computes the $(1 + \epsilon)$-approximation 
of $\wm^t$ in $\log t$ iterations. However, in each iteration only an 
approximate min-plus product is computed. Let $\awm_{\algidx}$ be the approximation of 
$\wm_{\algidx} := \wm^{2^{\algidx}}$. In the ${\algidx}$-th iteration we use 
$\amp(\awm_{{\algidx}-1}, \awm_{{\algidx}-1}, t W, R)$ 
to calculate $\awm_{\algidx}$ with $R$ chosen beforehand such that the desired error bound 
is reached for $\awm = \awm_{ \log t }$. 

\begin{algorithm}
	\caption{Approximation of $\wm^t$}
	\label{alg}
	\SetKwInOut{Input}{input}\SetKwInOut{Output}{output}
	\DontPrintSemicolon
	\BlankLine
	\Input{weight matrix $\wm$, error bound $\epsilon$, $t$ (a power of 2)}
	\Output{$(1 + \epsilon)$-approximation of $\wm^t$}
	\BlankLine
	$\awm \leftarrow \wm$\;
	$r \leftarrow 4 \log t  / \ln(1 + \epsilon)$\;
	$R \leftarrow 2^{\lceil \log r \rceil}$\;
	\For{$\log t$ times}{
		$\awm \leftarrow \amp(\awm, \awm, 2 t W, R)$\;
	}
	\KwRet{$\awm$}
\end{algorithm}

\begin{lemma}\label{lem:alg}
	Given an $0 < \epsilon \le 1$ and a power of two $ t \geq 1 $, Algorithm~\ref{alg} computes a 
	$(1+\epsilon)$-approximation $\awm$ of $\wm^t$ 
	in time
	\begin{equation*}
		O\left(n^{\omega} \cdot 
		\frac{\log^2 (t) }{\epsilon} \cdot 
		\log \left( t W \right) 
		\log^2 \left( \frac{\log (t)}{\epsilon} \right)
		\log (n) 
		\right)
		= \widetilde O \left( n^{\omega} \cdot 
		\frac{\log^2 (t) }{\epsilon} \cdot 
		\log \left( t W \right)  \right)\,
	\end{equation*}
	such that $\wm^t[i,j] \le \awm[i,j] \le \left(1 + \epsilon \right) \wm^t[i,j]$ for all $1 \le i,j \le n.$
\end{lemma}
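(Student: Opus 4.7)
The plan is to prove, by induction on the loop counter $s$ (ranging from $0$ to $\log t$), two claims about the matrix $\awm_s$ stored in $\awm$ after $s$ iterations: (i) $\wm^{2^s}[i,j] \le \awm_s[i,j] \le (1 + 4/R)^s \wm^{2^s}[i,j]$ for every $i,j$, and (ii) every finite entry of $\awm_s$ lies in $[0, 2tW]$, which is what justifies invoking Theorem~\ref{th:amp} with parameter $M = 2tW$. The base case $s = 0$ is immediate from $\awm_0 = \wm$.

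For the inductive step of (i), the lower bound follows from monotonicity of min-plus product together with Theorem~\ref{th:amp}: $\awm_{s-1} \ge \wm^{2^{s-1}}$ entrywise implies $\awm_{s-1} \mult \awm_{s-1} \ge \wm^{2^s}$, and $\awm_s$ dominates $\awm_{s-1} \mult \awm_{s-1}$. The upper bound is the more delicate step. From the inductive hypothesis, for every fixed $k$,
\[ \awm_{s-1}[i,k] + \awm_{s-1}[k,j] \le (1+4/R)^{s-1}\bigl(\wm^{2^{s-1}}[i,k] + \wm^{2^{s-1}}[k,j]\bigr); \]
minimising over $k$ shows that $\awm_{s-1} \mult \awm_{s-1}$ is already a $(1+4/R)^{s-1}$-approximation of $\wm^{2^s}$, and Theorem~\ref{th:amp} contributes one further factor of $(1+4/R)$, yielding the bound $(1+4/R)^s$. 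Claim (ii) then follows from (i) since $(1+4/R)^{s-1}\cdot 2^{s-1} W \le (1+\epsilon)\,tW/2 \le 2tW$ for $\epsilon \le 1$ and $s \le \log t$.

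Applying (i) at $s = \log t$ together with $R \ge r = 4\log(t)/\ln(1+\epsilon)$ gives
\[ (1 + 4/R)^{\log t} \le \exp(4 \log(t)/R) \le \exp(\ln(1+\epsilon)) = 1 + \epsilon, \]
so $\wm^t \le \awm \le (1+\epsilon)\wm^t$ as required. For the running time, each of the $\log t$ iterations calls $\amp$ with $M = 2tW$ and $R = O(\log(t)/\epsilon)$ (using $\ln(1+\epsilon) = \Theta(\epsilon)$ for $\epsilon \le 1$), costing $O(n^\omega R \log(M) \log^2(R) \log(n))$ by Theorem~\ref{th:amp}; summing over the $\log t$ iterations yields the stated bound. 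One should additionally note that $R$ is a power of two by construction and that $r \ge 4\log t \ge \log n$ in the regime of interest, so Theorem~\ref{th:amp} applies.

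The main conceptual obstacle is the upper-bound argument in (i): one might naively fear that squaring an approximate matrix squares the error, which would force $R = \Omega(t/\epsilon)$ and wreck the running time. The key observation is that both operands of the squaring carry the \emph{same} approximation factor, so the sum $\awm_{s-1}[i,k] + \awm_{s-1}[k,j]$ inherits that factor only once rather than as its square, and taking the minimum over $k$ preserves this. Consequently the error grows linearly in $s$, and the logarithmic choice of $R$ suffices.
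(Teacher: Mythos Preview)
Your proof is correct and follows essentially the same approach as the paper: both argue by induction on the loop counter that $\wm^{2^s}[i,j] \le \awm_s[i,j] \le (1+4/R)^s \wm^{2^s}[i,j]$, both bound the entries of the intermediate matrices by $2tW$ via the $(1+\epsilon)$ factor and $\epsilon \le 1$, and both finish with $(1+4/R)^{\log t} \le e^{\ln(1+\epsilon)} = 1+\epsilon$ and the same running-time arithmetic. Your presentation is a bit more streamlined (invoking monotonicity of the min-plus product rather than tracking minimising indices $k_c$, $k_d$ explicitly), but the substance is identical.
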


\begin{proof}
The basic idea is as follows.
The running time of $\amp$ depends linearly on $R$ and logarithmically on $M$,
the maximum entry of the input matrices.
Algorithm~\ref{alg} calls $\amp$ $ \log t $ times. 
Each call increases the error by a factor of $(1 + 4/R)$. However, as only $\log t $
approximate matrix multiplications are used, setting $R$ to the smallest power of
2 that is larger than $4 \log (t) / \ln(1 + \epsilon)$
suffices to bound the approximation error by $(1 + \epsilon)$.
We will show that $2 t W$ is an upper bound on the entries in the input matrices for $\amp$.
The stated running time follows directly from these two facts and Theorem~\ref{th:amp}.

Let $\awm_{\algidx}$ be the approximation of $\wm_{\algidx} := \wm^{2^{\algidx}}$ computed by the algorithm after iteration $ \algidx $.
Recall that $2^{\algidx} W$ is an upper bound on the maximum entry in 
$\wm_{\algidx}$. As we will show, all entries in $\awm_{\algidx}$ are at most $(1 + \epsilon)$-times
the entries in $\wm_{\algidx}$. Since we assume $ \epsilon \leq 1 $, we have $ 1 + \epsilon \leq 2 $. 
Thus $2^{\algidx+1} W$ is an upper bound on the entries in $\awm_{\algidx}$.
Hence $2 t W$ is an upper bound on the entries of $\awm_{\algidx}$ with 
$1 \le {\algidx} <  \log t$, i.e., for all input matrices of $\amp$ in our algorithm.

This results in an overall running time of
\begin{align*}
& O\left(n^{\omega} R \log \left( t W \right) \log (R) \log\log (R) \log (n) \cdot \log (t) \right)\\
& = O\left(n^{\omega} \cdot \frac{\log^2 (t) }{\log(1+\epsilon)} \cdot \log \left( t W \right) \log^2 \left( \frac{\log (t)}{\log(1+\epsilon)} \right) \log (n)\right)\\ 
& = O\left(n^{\omega} \cdot \frac{\log^2 (t) }{\epsilon} \cdot \log \left( t W \right) \log^2 \left( \frac{\log (t)}{\epsilon} \right) \log (n)\right)\,.
\end{align*}
The last equation follows from the inequality $ 1 / \ln(1+\epsilon) \leq (1 + \epsilon) / \epsilon $ for $\epsilon > 0$.
Since $ \epsilon \leq 1 $ it follows that $ 1 / \log(1+\epsilon) = O (1 / \epsilon) $.

	To show the claimed approximation guarantee, we will prove that the inequality
	\begin{equation*}
		\wm_{\algidx}[i,j] \le \awm_{\algidx}[i,j] \le \left(1 + \frac{4}{R} \right)^{\algidx} \wm_{\algidx}[i,j]\,.
	\end{equation*}
	holds after the ${\algidx}$-th iteration of Algorithm~\ref{alg}
	by induction on ${\algidx}$. Note that the $(1+\epsilon)$-approximation 
	follows from this inequality because the parameter $R$ is chosen such that 
	after the $(\log t)$-th iteration of the algorithm it holds that
	\begin{equation*}
		\left(1 + \frac{4}{R} \right)^{\log t} 
		 \le \left(1 + \frac{\ln(1 + \epsilon)}{\log t} \right)^{\log t}
		 \le e^{\ln(1 + \epsilon)}
		 = 1 + \epsilon \,.
	\end{equation*}
	
	For ${\algidx} = 0$ we have $\awm_{\algidx} = \wm_{\algidx}$
	and the inequality holds trivially. Assume the inequality holds for ${\algidx}$. We will
	show that it also holds for ${\algidx} + 1$. 
	
	First we prove the lower bound on $\awm_{{\algidx}+1}[i,j]$.
	Let $C_{{\algidx}+1}$ be the exact min-plus product of $\awm_{\algidx}$ with itself, i.e., 
	$C_{{\algidx}+1} = \awm_{\algidx} \mult \awm_{\algidx}$. Let $k_c$ be the minimizing index 
	such that $C_{{\algidx}+1}[i,j] = \min_{1 \leq k \leq n} (\awm_{\algidx}[i, k] 
	+ \awm_{\algidx}[k, j]) = \awm_{\algidx}[i, k_c] + \awm_{\algidx}[k_c, j]$.
	By the definition of the min-plus product
	\begin{equation}\label{eq:mp1}
		\wm_{{\algidx}+1}[i,j] = \min_{1 \leq k \leq n} (\wm_{\algidx}[i, k] + \wm_{\algidx}[k, j]) 
		\le \wm_{\algidx}[i, k_c] + \wm_{\algidx}[k_c, j]\,.
	\end{equation}
	By the induction hypothesis and the definition of $k_c$ we have
	\begin{equation}\label{eq:induction1}
		\wm_{\algidx}[i, k_c] + \wm_{\algidx}[k_c, j]
		\le \awm_{\algidx}[i, k_c] + \awm_{\algidx}[k_c, j]
		= C_{{\algidx}+1}[i,j]\,.
	\end{equation}
	By Theorem~\ref{th:amp} the values of $\awm_{{\algidx}+1}$ can only be larger than
	the values in $C_{{\algidx}+1}$, i.e.,
	\begin{equation}\label{eq:amp1}
		C_{{\algidx}+1}[i,j] \le \awm_{{\algidx}+1}[i,j]\,.
	\end{equation}
	Combining Equations~\eqref{eq:mp1},~\eqref{eq:induction1}, and~\eqref{eq:amp1}
	yields the claimed lower bound,
	\begin{equation*}
		\wm_{{\algidx}+1}[i,j] \le \awm_{{\algidx}+1}[i,j]\,.
	\end{equation*}
	
	Next we prove the upper bound on $\awm_{{\algidx}+1}[i,j]$.
	Let $k_d$ be the minimizing index such that $\wm_{{\algidx}+1}[i,j] = \wm_{\algidx}[i, k_d] + \wm_{\algidx}[k_d, j]$.
	Theorem~\ref{th:amp} gives the error from one call of $\amp$,
	i.e., the error in the entries of $\awm_{{\algidx}+1}$ compared to the entries of 
	$C_{{\algidx}+1}$. We have
	\begin{equation}\label{eq:amp2}
		\awm_{{\algidx}+1}[i,j] \le \left(1 + \frac{4}{R} \right) C_{{\algidx}+1}[i,j]\,.
	\end{equation}
	By the definition of the min-plus product we know that
	\begin{equation}\label{eq:mp2}
		C_{{\algidx}+1}[i,j] \le \awm_{\algidx}[i,k_d] + \awm_{\algidx}[k_d, j]\,.
	\end{equation}
	By the induction hypothesis and the definition of $k_d$ we can reformulate
	the error obtained in the first ${\algidx}$ iterations of Algorithm~\ref{alg} as follows:
	\begin{align}\label{eq:induction2}
		\awm_{\algidx}[i,k_d] + \awm_{\algidx}[k_d, j] &\le \left(1 + \frac{4}{R} \right)^{\algidx} \wm_{\algidx}[i, k_d] +
		\left(1 + \frac{4}{R} \right)^{\algidx} \wm_{\algidx}[k_d, j] \,, \notag \\
		&= \left(1 + \frac{4}{R} \right)^{\algidx} \left( \wm_{\algidx}[i, k_d] + \wm_{\algidx}[k_d, j] \right) \,, \notag \\
		&= \left(1 + \frac{4}{R} \right)^{\algidx} \wm_{{\algidx}+1}[i,j]\,.
	\end{align}
	Combining Equations~\eqref{eq:amp2},~\eqref{eq:mp2}, and~\eqref{eq:induction2}
	yields the upper bound
	\begin{equation*}
		\awm_{{\algidx}+1}[i,j] \le \left(1 + \frac{4}{R} \right)^{{\algidx}+1} \wm_{{\algidx}+1}[i,j]\,.\qedhere
	\end{equation*}
\end{proof}

Once we have computed an approximation of the matrix $\wm^t$, we extract from it the minimal entry of each row to obtain an approximation of $ \vt_t (x) $.
Here we use the equivalence between the minimum entry of row $x$ of $\wm^t$ and $\vt_t(x)$ established in Proposition~\ref{prop:rowmin}.
Remember that $\vt_t(x) / t$ approaches $\val(x)$ for $t$ large enough and later on we want to use the approximation of $\vt_t(x)$ to obtain an approximation
of the minimum cycle mean $\val(x)$.

\begin{lemma}\label{lem:approxvt}
The value $\at_t(x) := \min_{y \in V} \awm[x, y]$ approximates $\vt_t(x)$ with
$\vt_t(x) \le \at_t(x) \le (1 + \epsilon) \vt_t(x)\,$.
\end{lemma}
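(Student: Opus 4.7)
The plan is to combine the entry-wise approximation guarantee of Lemma~\ref{lem:alg} with the characterization of $\vt_t(x)$ as a row minimum given in Proposition~\ref{prop:rowmin}. Since both $\at_t(x)$ and $\vt_t(x)$ are defined as minima over $y \in V$ of the corresponding matrix row, the inequality should propagate through the $\min$ operation in both directions, with the only subtlety being that the minimizing index on the two sides need not coincide.

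For the lower bound, I would start from the entry-wise inequality $\wm^t[x,y] \le \awm[x,y]$ from Lemma~\ref{lem:alg}, which holds for every $y$. Taking the minimum over $y \in V$ on both sides and applying Proposition~\ref{prop:rowmin} to the left-hand side immediately gives $\vt_t(x) \le \at_t(x)$.

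For the upper bound, I would pick a minimizer $y^\ast \in \argmin_{y \in V} \wm^t[x,y]$, so that $\wm^t[x, y^\ast] = \vt_t(x)$ by Proposition~\ref{prop:rowmin}. Then $\at_t(x) \le \awm[x, y^\ast]$ by definition of $\at_t(x)$ as a minimum, and $\awm[x, y^\ast] \le (1+\epsilon)\wm^t[x, y^\ast] = (1+\epsilon)\vt_t(x)$ by Lemma~\ref{lem:alg}. Chaining these yields the desired upper bound.

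There is no real obstacle here: the whole argument is a routine manipulation of a $\min$ with an entry-wise two-sided bound, using that $\min$ is monotone (giving the lower bound directly) and that evaluating $\at_t(x)$ at any particular index — in particular the minimizer for $\wm^t$ — upper-bounds $\at_t(x)$ (giving the upper bound after one application of the per-entry approximation factor).
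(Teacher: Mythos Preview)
Your proposal is correct and follows essentially the same approach as the paper: both arguments combine the entry-wise bounds from Lemma~\ref{lem:alg} with the row-minimum characterization of $\vt_t(x)$ from Proposition~\ref{prop:rowmin}, using the minimizer of $\wm^t[x,\cdot]$ for the upper bound and monotonicity of $\min$ (equivalently, the minimizer of $\awm[x,\cdot]$) for the lower bound. The paper just names both minimizers explicitly, but the content is identical.
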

\begin{proof}
Let $y_{f}$ and $y_{d}$ be the indices where the $x$-th rows of
$\awm$ and $\wm^t$ obtain their minimal values, respectively, i.e.,
\begin{equation*}
	y_{f} := \argmin_{y \in V} \awm[x, y]
	\quad \text{and} \quad 
	y_{d} := \argmin_{y \in V} \wm^t[x, y]\,.
\end{equation*}
By these definitions and Lemma~\ref{lem:alg} we have
\begin{equation*}\label{eq:def1}
	\vt_t(x) = \wm^t[x, y_{d}] \le \wm^t[x, y_{f}] 
	\le \awm[x, y_{f}] = \at_t(x)\,
\end{equation*}
and
\begin{equation*}\label{eq:def2}
	\at_t(x) = \awm[x, y_{f}] \le \awm[x, y_{d}] 
	\le (1 + \epsilon)\wm^t[x, y_{d}] \,.\qedhere
\end{equation*}
\end{proof}

\subsection{Approximating the minimum cycle mean}

We now add the next building block to our algorithm.
So far, we can obtain an approximation $ \at_t (x) $ of $\vt_t(x)$ for any $ t $ that is a power of two.
We now show that $ \vt_t(x) / t $ is itself an approximation of the minimum cycle mean $ \val (x) $ for $ t $ large enough.
Then we argue that $ \at_t (x) / t $ approximates the minimum cycle mean $ \val (x) $ for $ t $ large enough.
This value of $ t $ bounds the number of iterations of our algorithm.
A similar technique was also used in \cite{ZwickP1996} to bound the number of iterations of the value iteration algorithm for the two-player mean-payoff game.

We start by showing that $\vt_t(x) / t $ differs from $ \val(x) $ by at most $n W / t$ \emph{for any $t$}.
Then we will turn this additive error into a multiplicative error by choosing a large enough value of $ t $.
A multiplicative error implies that we have to compute the solution exactly for $\val(x) = 0$.
We will use a separate procedure to identify all vertices $x$ with $\val(x) = 0$ and compute
the approximation only for the remaining vertices. Note that $\mu (x) > 0$
implies $\mu (x) \ge 1/n$ because all edge weights are integers.

\begin{lemma}\label{lem:nWbound}
	For every $x \in V$ and every integer $t \ge 1$ it holds that
	\begin{equation*}
		t \cdot \val(x) - n W \le \vt_t(x) \le t \cdot \val(x) + n W\,.
	\end{equation*}
\end{lemma}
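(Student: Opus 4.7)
The plan is to establish the two inequalities separately: the lower bound by a cycle-decomposition of an optimal length-$t$ walk, and the upper bound by exhibiting an explicit length-$t$ walk built from a simple path into a minimum-mean cycle followed by repeated traversals of that cycle.

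For the lower bound $\vt_t(x) \geq t\val(x) - nW$, I would take any walk $P$ of length $t$ from $x$ with $w(P) = \vt_t(x)$ and iteratively excise a simple cycle from $P$ whenever some vertex is revisited. What remains is a simple path $P_0$ from $x$ with $|P_0| \leq n-1$ and weight at least $0$, plus simple cycles $C_1, \ldots, C_m$ whose lengths sum to $t - |P_0|$. Each $C_i$ is a sub-walk of $P$ (which starts at $x$), hence reachable from $x$; by the definition of $\val(x)$ as the minimum mean weight over all cycles reachable from $x$, we get $w(C_i) \geq |C_i| \val(x)$. Combining with $w(P_0) \geq 0$ yields $\vt_t(x) \geq (t - |P_0|) \val(x) \geq (t - n + 1) \val(x) \geq t \val(x) - (n-1) W$, where the last inequality uses $\val(x) \leq W$ (any cycle has mean at most $W$ since edge weights are at most $W$).

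For the upper bound $\vt_t(x) \leq t\val(x) + nW$, let $C^*$ be a cycle reachable from $x$ with $|C^*| = k \leq n$ and $w(C^*) = k\val(x)$, let $v \in C^*$, and let $P$ be a simple path from $x$ to $v$ of length $\ell \leq n-1$ and weight $w(P) \leq \ell W$. For $t \leq n$, the bound is immediate from $\vt_t(x) \leq tW \leq nW$ and $\val(x) \geq 0$. For $t > n$, I would write $t - \ell = qk + r$ with $0 \leq r < k$ and consider the walk of length exactly $t$ obtained by concatenating $P$, $q$ full traversals of $C^*$, and the first $r$ edges of $C^*$ starting at $v$. Its weight equals $w(P) + qk\val(x) + g_v(r)$, where the partial-traversal weight satisfies $g_v(r) \leq w(C^*) = k\val(x)$ by non-negativity; substituting $qk = t-\ell-r$ and dropping non-positive terms yields $\vt_t(x) \leq t\val(x) + w(P) + k\val(x)$.

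The main obstacle is pinning down the constant $nW$ as opposed to the weaker $\approx 2nW$ that the above accounting directly gives (both $w(P) \leq (n-1)W$ and $k\val(x) \leq nW$ can independently reach order $nW$). A sharper estimate is possible by choosing the entry vertex $v$ on $C^*$ judiciously: averaging the partial-traversal weight $g_v(r)$ over the $k$ cyclic shifts of the starting vertex on $C^*$ gives exactly $r\val(x)$, so some starting vertex realizes $g_v(r) \leq r\val(x)$, which collapses the $k\val(x)$ summand and leaves only $w(P) \leq (n-1)W \leq nW$. Note that the precise constant is inessential for what follows: the next subsection converts an additive slack of $O(nW)$ into a multiplicative $(1+\epsilon)$ bound by taking $t = \Theta(n^2 W / \epsilon)$, which is unaffected by constant factors in the additive error.
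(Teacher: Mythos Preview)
Your lower-bound argument via cycle excision is correct and essentially identical to the paper's. Your direct upper-bound estimate $\vt_t(x)\le t\,\val(x)+w(P)+k\,\val(x)\le t\,\val(x)+(2n-1)W$ is also correct, and you are right that an additive $O(nW)$ slack is all that is needed downstream, since Lemma~\ref{lem:approxv} only uses $t=\Theta(n^2W/\epsilon)$.

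The averaging step you propose to recover the exact constant $nW$, however, does not go through as stated. The remainder $r=(t-\ell)\bmod k$ depends on the entry vertex $v$ through the path length $\ell$, so you cannot average $g_v(r)$ over the $k$ cyclic shifts of $v$ while holding $r$ fixed. If instead you try to reach a different entry vertex $v'$ by first walking $j$ steps along $C^*$ from a fixed $v_0$, then $\ell$ increases by $j$ and $r$ decreases by $j$ modulo $k$, and the resulting length-$t$ walk is literally the same walk as before (same edge sequence), with the same total weight; no new upper bound is produced. Choosing genuinely different simple paths $P_v$ to each $v\in C^*$ gives remainders $r_v$ that you do not control, so the ``for fixed $r$ some $v$ achieves $g_v(r)\le r\,\val(x)$'' conclusion cannot be applied.

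The paper obtains the sharp constant by a simpler, non-averaging device: it chooses the entry vertex $y\in C^*$ so that the simple path $P$ from $x$ to $y$ meets $C^*$ \emph{only} at $y$ (e.g.\ take $y$ to be the first vertex of $C^*$ encountered along a shortest path). Then the $\ell$ vertices of $P$ preceding $y$ are distinct and disjoint from the $c=|C^*|$ vertices of $C^*$, so $\ell\le n-c$. Since the partial round has $r\le c-1$ edges, the total number of edges outside complete rounds is $\ell+r\le n-1$; bounding each by $W$ and the complete rounds by $t\,\val(x)$ yields $\vt_t(x)\le t\,\val(x)+(n-1)W$ directly.
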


\begin{proof}
We first show the lower bound on $\vt_t(x)$.
Let $P$ be a path of length $t$ starting at $x$ with weight $\vt_t(x)$. 
Consider the cycles in $P$ and let $E'$ be the multiset of the edges in $P$ 
that are in a cycle of $P$. There can be at most $n$ edges that are not
in a cycle of $P$, thus there are at least $\max(t-n, 0)$ edges in $E'$.
Since $\val(x)$ is the minimum mean weight of any cycle reachable from $x$, 
the sum of the weight of the edges in $E'$ can be bounded below by $\val(x)$ times the
number of edges in $E'$. Furthermore, the value of $\val(x)$ can be at most $W$. 
As we only allow nonnegative edge weights, the sum of the weights of the edges 
in $E'$ is a lower bound on $\vt_t(x)$. Thus we have
\begin{equation*}
	\vt_t(x) \ge \sum_{e \in E'} w(e) 
	\ge (t-n) \val(x) \ge t \cdot \val (x) - n \cdot \val (x) \ge t \cdot \val(x) - n W\,.
\end{equation*}

Next we prove the upper bound on $\vt_t(x)$.
Let $l$ be the length of the shortest path from $x$ to a vertex $y$ in a minimum
mean-weight cycle $C$ reachable from $x$ (such that only $y$ is both in the
shortest path and in $C$). Let $c$ be the length of $C$. 
Let the path $Q$ be a path of length $t$ that consists
of the shortest path from $x$ to $y$, $\lfloor (t - l) / c \rfloor$ rounds on $C$,
and $t - l - c\lfloor (t - l) / c \rfloor$ additional edges in $C$.
By the definition of $\vt_t(x)$, we have $\vt_t(x) \le w(Q)$.
The sum of the length of the shortest path from $x$ to $y$ and the number of
the remaining edges of $Q$ not in a complete round on $C$ can be at most $n$ because in a graph
with nonnegative weights no shortest path has a cycle and no vertices in $C$
except $y$ are contained in the shortest path from $x$ to $y$. Each of these edges 
has a weight of at most $W$. The mean weight of $C$ is $\val(x)$, thus
the sum of the weight of the edges in all complete rounds on $C$ is 
$\val(x) \cdot c\lfloor (t - l) / c \rfloor \le \val(x) \cdot t$. Hence we have
\begin{equation*}
	\vt_t(x) \le w(Q) \le t \cdot \val(x) + n W\,. \qedhere
\end{equation*}
\end{proof}

In the next step we show that we can use the fact that $\vt_t(x) / t$ is an
approximation of $\val(x)$ to obtain a 
$(1 + \epsilon)$-approximation $\appr(x)$ of $\val(x)$ even if 
we only have an approximation $\at_t(x)$ of $\vt_t(x)$ with $(1 + \epsilon)$-error. 
We exclude the case $\val(x) = 0$ for the moment.

\begin{lemma}\label{lem:approxv}
	Assume we have an approximation $\at_t(x)$ of $\vt_t(x)$ such that
	$\vt_t(x) \le \at_t(x) \le (1 + \epsilon) \vt_t(x)$ for $0 < \epsilon \le 1/2$. 
	If 
	\begin{equation*}
		t \ge \frac{n^2 W}{\epsilon}\,,
		\quad
		\val(x) \ge \frac{1}{n}\,,
		\quad \text{and} \quad 
		\appr(x) := \frac{\at_t(x)}{(1-\epsilon)t}\,,
	\end{equation*}
	then 
	\begin{equation*}
		\val(x) \le \appr(x) 
		\le (1 + 7 \epsilon) \val(x)\,.
	\end{equation*}
\end{lemma}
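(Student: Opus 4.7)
The plan is to convert the additive error bound from Lemma~\ref{lem:nWbound} into a multiplicative bound using the assumptions $ t \ge n^2 W / \epsilon $ and $ \val(x) \ge 1/n $, and then chain this with the multiplicative error in $ \at_t(x) $ from the hypothesis.

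First I would substitute the assumed lower bound on $ t $ into Lemma~\ref{lem:nWbound}. Since $ t \ge n^2 W / \epsilon $, we have $ n W / t \le \epsilon / n $, and since $ \val(x) \ge 1/n $ we have $ \epsilon/n \le \epsilon \cdot \val(x) $. Plugging this into the two-sided inequality of Lemma~\ref{lem:nWbound} yields
\begin{equation*}
(1 - \epsilon) \cdot t \cdot \val(x) \le \vt_t(x) \le (1 + \epsilon) \cdot t \cdot \val(x)\,.
\end{equation*}
So $ \vt_t(x)/t $ is already a $(1 \pm \epsilon) $-approximation of $ \val(x) $ in the multiplicative sense.

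Next I would combine this with the hypothesis $ \vt_t(x) \le \at_t(x) \le (1+\epsilon) \vt_t(x) $. For the lower bound, the left inequality gives $ \at_t(x) \ge \vt_t(x) \ge (1 - \epsilon) t \cdot \val(x) $, so dividing by $ (1-\epsilon) t $ yields $ \appr(x) \ge \val(x) $ directly. For the upper bound, chaining the two multiplicative errors gives
\begin{equation*}
\at_t(x) \le (1+\epsilon) \vt_t(x) \le (1+\epsilon)^2 \cdot t \cdot \val(x)\,,
\end{equation*}
hence $ \appr(x) \le \tfrac{(1+\epsilon)^2}{1-\epsilon} \val(x) $.

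The only nontrivial step is verifying that $ (1+\epsilon)^2/(1-\epsilon) \le 1 + 7\epsilon $ for $ 0 < \epsilon \le 1/2 $. I would clear denominators: the inequality is equivalent to $ 1 + 2\epsilon + \epsilon^2 \le 1 + 6\epsilon - 7\epsilon^2 $, i.e., $ 8\epsilon^2 \le 4\epsilon $, which is just $ \epsilon \le 1/2 $. This is the one place the assumption $ \epsilon \le 1/2 $ (rather than $ \epsilon \le 1 $) is actually used. The main obstacle is really just being careful about why the two separate multiplicative slacks of $(1-\epsilon)$ in the denominator and $(1+\epsilon)^2$ in the numerator combine into the clean bound $ 1 + 7\epsilon $; beyond that, the proof is a direct plug-in of the preceding lemma together with the two assumed bounds on $ t $ and $ \val(x) $.
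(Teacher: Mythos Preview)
Your proposal is correct and follows essentially the same route as the paper: convert the additive bound of Lemma~\ref{lem:nWbound} into $(1\pm\epsilon)\,t\,\val(x)$ using $t \ge n^2 W/\epsilon$ and $\val(x) \ge 1/n$, chain with the hypothesis on $\at_t(x)$, and finish with the inequality $(1+\epsilon)^2/(1-\epsilon) \le 1+7\epsilon$ for $\epsilon \le 1/2$. Your explicit algebraic verification of that last inequality is exactly what the paper leaves as ``simple arithmetic.''
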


\begin{proof}
	We first show that $\appr(x)$ is at least as large as $\val(x)$.
	From Lemma~\ref{lem:nWbound} we have $\vt_t(x) \ge 
	t \cdot \val(x) - n W$. As $t$ is chosen large enough,
	\begin{equation*}
		\frac{\vt_t(x)}{t} \ge \val(x) - \frac{n W}{t} 
		\ge \val(x) - \frac{\epsilon}{n}
		\ge \val (x) - \epsilon \val (x)
		\ge (1 - \epsilon) \val(x)\,.
	\end{equation*}
	Thus, by the assumption $\vt_t(x) \le \at_t(x)$ we have
	\begin{equation*}
		\val(x) \le \frac{\at_t(x)}{(1-\epsilon)t} = \appr(x)\,.
	\end{equation*}

	For the upper bound on $\appr(x)$ we use the inequality 
	$\vt_t(x) \le t \cdot \val(x) + n W$ from Lemma~\ref{lem:nWbound}. 
	As $t$ is chosen large enough,
	\begin{equation*}
		\frac{\vt_t(x)}{t} \le \val(x) + \frac{n W}{t} 
		\le \val(x) + \frac{\epsilon}{n}
		\le (1 + \epsilon) \val(x)\,.
	\end{equation*}
	With $\at_t(x) \le (1 + \epsilon) \vt_t(x)$ this gives
	\begin{equation*}
		\appr(x) = \frac{\at_t(x)}{(1-\epsilon)t} 
		\le \frac{(1 + \epsilon)^2}{(1-\epsilon)} \val(x)\,.
	\end{equation*}
	It can be verified by simple arithmetic that for $\epsilon > 0$ 
	the inequality $\epsilon \le 1/2$ is equivalent to
	\begin{equation*}
		\frac{(1 + \epsilon)^2}{(1-\epsilon)} 
		\le (1 + 7 \epsilon)\,.\qedhere
	\end{equation*}
\end{proof}

As a last ingredient to our approximation algorithm, we design a procedure that deals with the special case that the minimum cycle mean is $ 0 $.
Since our goal is an algorithm with multiplicative error, we have to be able to compute the solution exactly in that case.
This can be done in linear time because the edge-weights are nonnegative.

\begin{proposition}\label{prop:val0}
Given a graph with nonnegative integer edge weights, we can find out all vertices $ x $ such that $ \val (x) = 0 $ in time $ O (m) $.
\end{proposition}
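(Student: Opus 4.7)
The plan is to exploit the following simple structural observation: because all edge weights are nonnegative integers, a cycle $C$ has mean weight $\val(C) = 0$ if and only if every edge of $C$ has weight $0$. Consequently, $\val(x) = 0$ holds exactly for those vertices $x$ from which a \emph{zero-weight cycle} is reachable in $G$. Our task therefore reduces to (i) identifying all vertices that lie on some zero-weight cycle, and (ii) computing the set of vertices that can reach one of these in $G$.

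For step~(i), I would form the subgraph $G_0 = (V, E_0)$ consisting only of the zero-weight edges, $E_0 = \{e \in E : w(e) = 0\}$, and compute its strongly connected components in time $O(n + m)$ using Tarjan's algorithm. A vertex lies on a cycle of $G_0$ if and only if it belongs to an SCC that contains at least one edge, i.e., an SCC of size at least two or a singleton SCC with a zero-weight self-loop. Let $Z \subseteq V$ be the union of all such SCCs; the characterization above shows that $Z$ is precisely the set of vertices lying on some zero-weight cycle of $G$.

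For step~(ii), I would build the reverse graph $G^R$ of $G$ and run a single BFS (or DFS) from the entire set $Z$ simultaneously. This yields, in $O(n + m)$ time, exactly the set of vertices $x$ from which some vertex of $Z$ is reachable in $G$, which by the initial observation coincides with $\{x \in V : \val(x) = 0\}$.

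Since both phases run in $O(n + m)$ time and the paper's standing assumption $m \geq n$ gives $O(n + m) = O(m)$, the overall bound of $O(m)$ follows. There is no real obstacle here; the only thing to be careful about is the edge case of a zero-weight self-loop, which must be treated as a cycle even though its SCC has size one, and the nonnegativity assumption, which is what makes the equivalence ``zero-mean cycle $\Longleftrightarrow$ all-zero cycle'' valid in the first place.
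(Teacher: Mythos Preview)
Your proposal is correct and follows essentially the same approach as the paper: reduce to detecting zero-weight cycles via an SCC computation on the zero-weight subgraph, then perform a backward reachability pass. The only difference is cosmetic: the paper first computes the SCCs of the full graph $G$ and then, within each SCC, looks for cycles among the zero-weight edges, whereas you compute SCCs directly on $G_0=(V,E_0)$; your variant is slightly more streamlined (one SCC pass instead of two) and your explicit handling of zero-weight self-loops covers the edge case the paper dismisses by assuming no self-loops.
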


\begin{proof}
Note that in the case of nonnegative edge weights we have $ \val (x) \geq 0 $.
Furthermore, a cycle can only have mean weight $ 0 $ if all edges on this cycle have weight $ 0 $.
Thus, it will be sufficient to detect cycles in the graph that only contain edges that have weight $ 0 $.

We proceed as follows.
First, we compute the strongly connected components of $ G $, the original graph.
Each strongly connected component $ G_i $ (where $ 1 \leq i \leq k $) is a subgraph of $ G $ with a set of vertices $ V_i $ and a set of edges $ E_i $.
For every $ 1 \leq i \leq k $ we let $ G_i^0 = (E_i^0, V_i) $ denote the subgraph of $ G_i $ that only contains edges of weight $ 0 $, i.e., $ E_i^0 = \{ e \in E_i | w (e) = 0 \} $.
As argued above, $ G_i $ contains a zero-mean cycle if and only if $ G_i^0 $ contains a cycle.
We can check whether $ G_i^0 $ contains a cycle by computing the strongly connected components of $ G_i^0 $:
$ G_i^0 $ contains a cycle if and only if it has a strongly connected component of size at least $ 2 $
(we can assume w.l.o.g. that there are no self-loops).
Let $ Z $ be the set of all vertices in strongly connected components of $ G $ that contain a zero-mean cycle.
The vertices in $Z$ are not the only vertices that can reach a zero-mean cycle.
We can identify all vertices that can reach a zero-mean cycle by performing a 
linear-time graph traversal to identify all vertices that can reach $Z$. 

Since all steps take linear time, the total running time of this algorithm is $ O (m) $.
\end{proof}

Finally, we wrap up all arguments to obtain our algorithm for approximating the minimum cycle mean.
This algorithms performs $ \log t $ approximate min-plus matrix multiplications to compute an approximation of $ \wm^t $ and $ \vt_t (x) $.
Lemma~\ref{lem:approxv} tells us that $ t = n^2 W / \epsilon $ is just the right number
to guarantee that our approximation of $ \vt_t (x) $ can be used to obtain an approximation of $ \val (x) $.
The value of $ t $ is relatively large but the running time of our algorithm depends on $ t $ only in a logarithmic way.

\begin{theorem}\label{th:nonnegative}
Given a graph with nonnegative integer edge weights, we can compute
an approximation $ \appr (x) $ of the minimum cycle mean for every 
vertex $ x $ such that $ \val (x) \leq \appr (x) \leq (1 + \epsilon) \val (x)$
for $0 < \epsilon \le 1$
in time 
	\begin{equation*}
		O\left(\frac{ n^{\omega} }{\epsilon}
		\log^3 \left( \frac{n W}{\epsilon} \right) 
		\log^2 \left( \frac{\log \left(\frac{n W}{\epsilon}\right)}{\epsilon} \right)
		\log (n) 
		\right)
		= \widetilde O \left( \frac{ n^{\omega} }{\epsilon}
		\log^3 \left( \frac{n W}{\epsilon} \right) 
		\right)\,.
	\end{equation*}
\end{theorem}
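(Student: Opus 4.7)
The plan is to assemble the three main ingredients already developed: Proposition~\ref{prop:val0} handles the corner case $\val(x)=0$ where a multiplicative approximation would degenerate; Algorithm~\ref{alg} (analyzed in Lemma~\ref{lem:alg}) computes an entrywise $(1+\epsilon')$-approximation $\awm$ of $\wm^t$; Lemma~\ref{lem:approxvt} passes this error through the row-minimum operation to yield an approximation of $\vt_t(x)$; and finally Lemma~\ref{lem:approxv} converts this into a multiplicative approximation of $\val(x)$, at the price of inflating the relative error by a factor of $7$ and requiring $t \ge n^2 W/\epsilon'$.

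Concretely, I would first invoke Proposition~\ref{prop:val0} in $O(m)$ time to compute the set $Z$ of vertices with $\val(x)=0$, setting $\appr(x):=0$ for $x \in Z$. For every remaining vertex the integrality of the edge weights forces $\val(x) \ge 1/n$, so the precondition of Lemma~\ref{lem:approxv} is met. Next, set $\epsilon' := \epsilon/7$ and pick $t$ to be the smallest power of two at least $n^2 W/\epsilon'$, so that $t = O(n^2 W/\epsilon)$. Run Algorithm~\ref{alg} with parameters $(\wm, \epsilon', t)$ to obtain $\awm$, and for each $x \notin Z$ extract $\at_t(x) := \min_{y \in V} \awm[x,y]$ and output $\appr(x) := \at_t(x) / ((1-\epsilon')t)$.

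Correctness then chains directly: Lemma~\ref{lem:approxvt} gives $\vt_t(x) \le \at_t(x) \le (1+\epsilon')\vt_t(x)$, and Lemma~\ref{lem:approxv} applied with error parameter $\epsilon'$ yields $\val(x) \le \appr(x) \le (1+7\epsilon')\val(x) = (1+\epsilon)\val(x)$. For the running time, substitute $t = \Theta(n^2 W/\epsilon)$ and $\epsilon' = \Theta(\epsilon)$ into the bound of Lemma~\ref{lem:alg}: we have $\log^2 t = O(\log^2(nW/\epsilon))$ and $\log(tW) = O(\log(nW/\epsilon))$, so the dominant cost is $\widetilde{O}(n^\omega \log^3(nW/\epsilon)/\epsilon)$; the $O(m)$ preprocessing and the $O(n^2)$ cost of the row-minimum extractions are absorbed into this bound.

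The only real obstacle is careful bookkeeping of error parameters and preconditions: one must verify that $\epsilon \le 1$ implies $\epsilon' = \epsilon/7 \le 1/2$ (so that Lemma~\ref{lem:approxv} is applicable and the denominator $1-\epsilon'$ is well-behaved), and that the polylogarithmic factors hidden in the $\widetilde{O}$ of Lemma~\ref{lem:alg} — in particular the $\log^2(\log(t)/\epsilon')$ term and the $\log n$ factor — remain $\polylog(nW/\epsilon)$ after substitution, yielding precisely the explicit bound in the theorem statement. Everything else is a direct assembly of the already-established pieces.
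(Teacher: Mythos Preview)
Your proposal is correct and follows essentially the same approach as the paper: handle $\val(x)=0$ via Proposition~\ref{prop:val0}, set $\epsilon'=\epsilon/7$ and $t$ the smallest power of two at least $n^2 W/\epsilon'$, run Algorithm~\ref{alg}, extract row minima, and invoke Lemmas~\ref{lem:approxvt} and~\ref{lem:approxv} to finish. If anything, you are slightly more explicit than the paper about the precondition check $\epsilon'\le 1/2$ and the reason $\val(x)\ge 1/n$ for the remaining vertices.
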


\begin{proof}
First we find all vertices $x$ with $\val(x) = 0$. By Proposition~\ref{prop:val0}
this takes time $O(n^2)$ for $m = O(n^2)$. For the remaining vertices $x$ we
approximate $\val(x)$ as follows.

Let $\epsilon':= \epsilon / 7$. If we execute Algorithm~\ref{alg} with
weight matrix $\wm$, error bound $\epsilon'$ and $t$ such that $t$ is the 
smallest power of two with $t \ge n^2 W / \epsilon'$, 
we obtain a $(1+\epsilon')$-approximation $\awm[x,y]$ of $\wm^t[x, y]$ 
for all vertices $x$ and $y$ (Lemma~\ref{lem:alg}).
By calculating for every $x$ the minimum entry of $\awm[x,y]$ over all $y$ we
have a $(1 + \epsilon')$-approximation of $\vt_t(x)$ (Lemma~\ref{lem:approxvt}).
By Lemma~\ref{lem:approxv} $\appr(x) := \at_t(x) / ((1-\epsilon')t)$ 
is for this choice of $t$ an approximation of $\val(x)$ such that 
$\val(x) \le \appr(x) \le (1 + 7 \epsilon') \val(x)$. 
By substituting $\epsilon'$ with $\epsilon / 7$ we get $\val(x) \le \appr(x) 
\le (1 + \epsilon) \val(x)\,$ i.e., a $(1 + \epsilon)$-approximation of $\val(x)$.

By Lemma~\ref{lem:alg} the running time of Algorithm~\ref{alg} for 
$ t = 2^{\lceil \log (n^2 W / \epsilon') \rceil} = O (n^2 W / \epsilon) $
is
\begin{align*}
	O\left(\frac{ n^{\omega} }{\epsilon}
	\log^2 \left( \frac{n^2 W}{\epsilon} \right) 
	\log \left( \frac{n^2 W^2}{\epsilon} \right)
	\log^2 \left( \frac{\log \left(\frac{n^2 W}{\epsilon}\right)}{\epsilon} \right)
	\log (n) 
	\right)\,.
\end{align*}
With $\log(n^2 W) \le \log((n W)^2) = O(\log(n W))$ we get that Algorithm~\ref{alg} runs in time
\begin{equation}
	O\left(\frac{ n^{\omega} }{\epsilon}
	\log^3 \left( \frac{n W}{\epsilon} \right) 
	\log^2 \left( \frac{\log \left(\frac{n W}{\epsilon}\right)}{\epsilon} \right)
	\log (n)
	\right)\,.\qedhere
\end{equation}
\end{proof}

\section{Open problems}

We hope that this work draws attention to the problem of approximating
the minimum cycle mean. It would be interesting to study whether
there is a faster approximation algorithm for the minimum cycle mean problem,
maybe at the cost of a worse approximation.
The running time of our algorithm immediately improves if faster algorithms
for classic matrix multiplication, min-plus matrix multiplication or 
approximate min-plus multiplication are found. However, a different
approach might lead to better results and might shed new light on how well the
problem can be approximated. Therefore it would be interesting to remove the dependence
on fast matrix multiplication and develop a so-called combinatorial
algorithm.

Another obvious extension is to allow negative edge weights in the input graph.
Furthermore, we only consider the minimum cycle mean problem, while it might
be interesting to actually output a cycle with approximately optimal mean weight.

\bibliographystyle{eptcs}
\bibliography{literature}

\begin{thebibliography}{10}
\providecommand{\bibitemdeclare}[2]{}
\providecommand{\surnamestart}{}
\providecommand{\surnameend}{}
\providecommand{\urlprefix}{Available at }
\providecommand{\url}[1]{\texttt{#1}}
\providecommand{\href}[2]{\texttt{#2}}
\providecommand{\urlalt}[2]{\href{#1}{#2}}
\providecommand{\doi}[1]{doi:\urlalt{http://dx.doi.org/#1}{#1}}
\providecommand{\bibinfo}[2]{#2}

\bibitemdeclare{book}{AhoHU1974}
\bibitem{AhoHU1974}
\bibinfo{author}{Alfred~V. \surnamestart Aho\surnameend},
  \bibinfo{author}{John~E. \surnamestart Hopcroft\surnameend} \&
  \bibinfo{author}{Jeffrey~D. \surnamestart Ullman\surnameend}
  (\bibinfo{year}{1974}): \emph{\bibinfo{title}{{The Design and Analysis of
  Computer Algorithms}}}.
\newblock \bibinfo{publisher}{Addison-Wesley}.

\bibitemdeclare{book}{AhujaMO1992}
\bibitem{AhujaMO1992}
\bibinfo{author}{Ravindra~K. \surnamestart Ahuja\surnameend},
  \bibinfo{author}{Thomas~L. \surnamestart Magnanti\surnameend} \&
  \bibinfo{author}{James~B. \surnamestart Orlin\surnameend}
  (\bibinfo{year}{1993}): \emph{\bibinfo{title}{{Network flows: theory,
  algorithms, and applications}}}.
\newblock \bibinfo{publisher}{Prentice Hall}.

\bibitemdeclare{article}{AlonGM1997}
\bibitem{AlonGM1997}
\bibinfo{author}{Noga \surnamestart Alon\surnameend}, \bibinfo{author}{Zvi
  \surnamestart Galil\surnameend} \& \bibinfo{author}{Oded \surnamestart
  Margalit\surnameend} (\bibinfo{year}{1997}): \emph{\bibinfo{title}{{On the
  Exponent of the All Pairs Shortest Path Problem}}}.
\newblock {\sl \bibinfo{journal}{Journal of Computer and System Sciences}}
  \bibinfo{volume}{54}(\bibinfo{number}{2}), pp. \bibinfo{pages}{255--262},
  \doi{10.1006/jcss.1997.1388}.
\newblock \bibinfo{note}{Announced at FOCS '91}.

\bibitemdeclare{inproceedings}{BGHJ09}
\bibitem{BGHJ09}
\bibinfo{author}{Roderick \surnamestart Bloem\surnameend},
  \bibinfo{author}{Karin \surnamestart Greimel\surnameend},
  \bibinfo{author}{Thomas~A. \surnamestart Henzinger\surnameend} \&
  \bibinfo{author}{Barbara \surnamestart Jobstmann\surnameend}
  (\bibinfo{year}{2009}): \emph{\bibinfo{title}{{Synthesizing robust
  systems}}}.
\newblock In: {\sl \bibinfo{booktitle}{FMCAD}}, pp. \bibinfo{pages}{85--92},
  \doi{10.1109/FMCAD.2009.5351139}.

\bibitemdeclare{inproceedings}{BorosEFGMM2011}
\bibitem{BorosEFGMM2011}
\bibinfo{author}{Endre \surnamestart Boros\surnameend}, \bibinfo{author}{Khaled
  \surnamestart Elbassioni\surnameend}, \bibinfo{author}{Mahmoud \surnamestart
  Fouz\surnameend}, \bibinfo{author}{Vladimir \surnamestart
  Gurvich\surnameend}, \bibinfo{author}{Kazuhisa \surnamestart
  Makino\surnameend} \& \bibinfo{author}{Bodo \surnamestart Manthey\surnameend}
  (\bibinfo{year}{2011}): \emph{\bibinfo{title}{{Stochastic Mean Payoff Games:
  Smoothed Analysis and Approximation Schemes}}}.
\newblock In: {\sl \bibinfo{booktitle}{ICALP}}, pp. \bibinfo{pages}{147--158},
  \doi{10.1007/978-3-642-22006-7\_13}.

\bibitemdeclare{article}{ButkovicG1992}
\bibitem{ButkovicG1992}
\bibinfo{author}{Peter \surnamestart Butkovic\surnameend} \&
  \bibinfo{author}{Raymond~A. \surnamestart Cuninghame-Green\surnameend}
  (\bibinfo{year}{1992}): \emph{\bibinfo{title}{{An $ O(n^2) $ algorithm for
  the maximum cycle mean of an $ n \times n $ bivalent matrix.}}}
\newblock {\sl \bibinfo{journal}{Discrete Applied Mathematics}}
  \bibinfo{volume}{35}(\bibinfo{number}{2}), pp. \bibinfo{pages}{157--162},
  \doi{10.1016/0166-218X(92)90039-D}.

\bibitemdeclare{article}{Chan2010}
\bibitem{Chan2010}
\bibinfo{author}{Timothy~M. \surnamestart Chan\surnameend}
  (\bibinfo{year}{2010}): \emph{\bibinfo{title}{{More Algorithms for All-Pairs
  Shortest Paths in Weighted Graphs}}}.
\newblock {\sl \bibinfo{journal}{SIAM Journal on Computing}}
  \bibinfo{volume}{39}(\bibinfo{number}{5}), pp. \bibinfo{pages}{2075--2089},
  \doi{10.1137/08071990X}.
\newblock \bibinfo{note}{Announced at STOC '07}.

\bibitemdeclare{article}{DasdanG1998}
\bibitem{DasdanG1998}
\bibinfo{author}{Ali \surnamestart Dasdan\surnameend} \&
  \bibinfo{author}{Rajesh~K. \surnamestart Gupta\surnameend}
  (\bibinfo{year}{1998}): \emph{\bibinfo{title}{{Faster Maximum and Minimum
  Mean Cycle Algorithms for System-Performance Analysis}}}.
\newblock {\sl \bibinfo{journal}{IEEE Transactions on Computer-Aided Design of
  Integrated Circuits and Systems}} \bibinfo{volume}{17}(\bibinfo{number}{10}),
  pp. \bibinfo{pages}{889--899}, \doi{10.1109/43.728912}.

\bibitemdeclare{inproceedings}{DM10}
\bibitem{DM10}
\bibinfo{author}{Manfred \surnamestart Droste\surnameend} \&
  \bibinfo{author}{Ingmar \surnamestart Meinecke\surnameend}
  (\bibinfo{year}{2010}): \emph{\bibinfo{title}{{Describing Average- and
  Longtime-Behavior by Weighted {MSO} Logics}}}.
\newblock In: {\sl \bibinfo{booktitle}{MFCS}}, pp. \bibinfo{pages}{537--548},
  \doi{10.1007/978-3-642-15155-2\_47}.

\bibitemdeclare{article}{EhrenfeuchtM1979}
\bibitem{EhrenfeuchtM1979}
\bibinfo{author}{Andrzej \surnamestart Ehrenfeucht\surnameend} \&
  \bibinfo{author}{Jan \surnamestart Mycielski\surnameend}
  (\bibinfo{year}{1979}): \emph{\bibinfo{title}{{P}ositional strategies for
  mean payoff games}}.
\newblock {\sl \bibinfo{journal}{International Journal of Game Theory}}
  \bibinfo{volume}{8}(\bibinfo{number}{2}), pp. \bibinfo{pages}{109--113},
  \doi{10.1007/BF01768705}.

\bibitemdeclare{article}{Fredman1976}
\bibitem{Fredman1976}
\bibinfo{author}{Michael~L. \surnamestart Fredman\surnameend}
  (\bibinfo{year}{1976}): \emph{\bibinfo{title}{{New Bounds on the Complexity
  of the Shortest Path Problem}}}.
\newblock {\sl \bibinfo{journal}{SIAM Journal on Computing}}
  \bibinfo{volume}{5}(\bibinfo{number}{1}), pp. \bibinfo{pages}{83--89},
  \doi{10.1137/0205006}.

\bibitemdeclare{article}{Goldberg1995}
\bibitem{Goldberg1995}
\bibinfo{author}{Andrew~V. \surnamestart Goldberg\surnameend}
  (\bibinfo{year}{1995}): \emph{\bibinfo{title}{{Scaling Algorithms for the
  Shortest Paths Problem}}}.
\newblock {\sl \bibinfo{journal}{SIAM Journal on Computing}}
  \bibinfo{volume}{24}(\bibinfo{number}{3}), pp. \bibinfo{pages}{494--504},
  \doi{10.1137/S0097539792231179}.

\bibitemdeclare{article}{GurvichKK1988}
\bibitem{GurvichKK1988}
\bibinfo{author}{V.A. \surnamestart Gurvich\surnameend}, \bibinfo{author}{A.V.
  \surnamestart Karzanov\surnameend} \& \bibinfo{author}{L.G. \surnamestart
  Khachiyan\surnameend} (\bibinfo{year}{1988}): \emph{\bibinfo{title}{{C}yclic
  games and an algorithm to find minimax cycle means in directed graphs}}.
\newblock {\sl \bibinfo{journal}{USSR Computational Mathematics and
  Mathematical Physics}} \bibinfo{volume}{28}(\bibinfo{number}{5}), pp.
  \bibinfo{pages}{85--91}, \doi{10.1016/0041-5553(88)90012-2}.

\bibitemdeclare{inproceedings}{HansenZ2010}
\bibitem{HansenZ2010}
\bibinfo{author}{Thomas~Dueholm \surnamestart Hansen\surnameend} \&
  \bibinfo{author}{Uri \surnamestart Zwick\surnameend} (\bibinfo{year}{2010}):
  \emph{\bibinfo{title}{{Lower Bounds for Howard's Algorithm for Finding
  Minimum Mean-Cost Cycles}}}.
\newblock In: {\sl \bibinfo{booktitle}{ISAAC}}, pp. \bibinfo{pages}{415--426},
  \doi{10.1007/978-3-642-17517-6\_37}.

\bibitemdeclare{article}{HartmannO1993}
\bibitem{HartmannO1993}
\bibinfo{author}{Mark \surnamestart Hartmann\surnameend} \&
  \bibinfo{author}{James~B. \surnamestart Orlin\surnameend}
  (\bibinfo{year}{1993}): \emph{\bibinfo{title}{{Finding Minimum Cost to Time
  Ratio Cycles With Small Integral Transit Times}}}.
\newblock {\sl \bibinfo{journal}{Networks}}
  \bibinfo{volume}{23}(\bibinfo{number}{6}), pp. \bibinfo{pages}{567--574},
  \doi{10.1002/net.3230230607}.

\bibitemdeclare{book}{Howard1960}
\bibitem{Howard1960}
\bibinfo{author}{Ronald~A. \surnamestart Howard\surnameend}
  (\bibinfo{year}{1960}): \emph{\bibinfo{title}{{Dynamic Programming and Markov
  Processes}}}.
\newblock \bibinfo{publisher}{MIT Press}.

\bibitemdeclare{article}{Karp1978}
\bibitem{Karp1978}
\bibinfo{author}{Richard~M. \surnamestart Karp\surnameend}
  (\bibinfo{year}{1978}): \emph{\bibinfo{title}{{A characterization of the
  minimum cycle mean in a digraph}}}.
\newblock {\sl \bibinfo{journal}{Discrete Mathematics}}
  \bibinfo{volume}{23}(\bibinfo{number}{3}), pp. \bibinfo{pages}{309--311},
  \doi{10.1016/0012-365X(78)90011-0}.

\bibitemdeclare{article}{KarpO1981}
\bibitem{KarpO1981}
\bibinfo{author}{Richard~M. \surnamestart Karp\surnameend} \&
  \bibinfo{author}{James~B. \surnamestart Orlin\surnameend}
  (\bibinfo{year}{1981}): \emph{\bibinfo{title}{{Parametric shortest path
  algorithms with an application to cyclic staffing}}}.
\newblock {\sl \bibinfo{journal}{Discrete Applied Mathematics}}
  \bibinfo{volume}{3}(\bibinfo{number}{1}), pp. \bibinfo{pages}{37--45},
  \doi{10.1016/0166-218X(81)90026-3}.

\bibitemdeclare{book}{Lawler1976}
\bibitem{Lawler1976}
\bibinfo{author}{Eugène~L. \surnamestart Lawler\surnameend}
  (\bibinfo{year}{1976}): \emph{\bibinfo{title}{{Combinatorial optimization:
  Networks and Matroids}}}.
\newblock \bibinfo{publisher}{Dover Publications}.

\bibitemdeclare{inproceedings}{Madani2002}
\bibitem{Madani2002}
\bibinfo{author}{Omid \surnamestart Madani\surnameend} (\bibinfo{year}{2002}):
  \emph{\bibinfo{title}{{Polynomial Value Iteration Algorithms for
  Deterministic MDPs}}}.
\newblock In: {\sl \bibinfo{booktitle}{UAI}}, pp. \bibinfo{pages}{311--318}.
\newblock \urlprefix\url{http://dl.acm.org/citation.cfm?id=2073913}.

\bibitemdeclare{article}{McCormick1993}
\bibitem{McCormick1993}
\bibinfo{author}{S.~Thomas \surnamestart McCormick\surnameend}
  (\bibinfo{year}{1993}): \emph{\bibinfo{title}{{Approximate Binary Search
  Algorithms for Mean Cuts and Cycles}}}.
\newblock {\sl \bibinfo{journal}{Operations Research Letters}}
  \bibinfo{volume}{14}(\bibinfo{number}{3}), pp. \bibinfo{pages}{129--132},
  \doi{10.1016/0167-6377(93)90022-9}.

\bibitemdeclare{article}{OrlinA1992}
\bibitem{OrlinA1992}
\bibinfo{author}{James~B. \surnamestart Orlin\surnameend} \&
  \bibinfo{author}{Ravindra~K. \surnamestart Ahuja\surnameend}
  (\bibinfo{year}{1992}): \emph{\bibinfo{title}{{New scaling algorithms for the
  assignment and minimum mean cycle problems}}}.
\newblock {\sl \bibinfo{journal}{Mathematical Programming}}
  \bibinfo{volume}{54}(\bibinfo{number}{1-3}), pp. \bibinfo{pages}{41--56},
  \doi{10.1007/BF01586040}.

\bibitemdeclare{inproceedings}{RothBKM2010}
\bibitem{RothBKM2010}
\bibinfo{author}{Aaron \surnamestart Roth\surnameend},
  \bibinfo{author}{Maria-Florina \surnamestart Balcan\surnameend},
  \bibinfo{author}{Adam \surnamestart Kalai\surnameend} \&
  \bibinfo{author}{Yishay \surnamestart Mansour\surnameend}
  (\bibinfo{year}{2010}): \emph{\bibinfo{title}{{On the Equilibria of
  Alternating Move Games}}}.
\newblock In: {\sl \bibinfo{booktitle}{SODA}}, pp. \bibinfo{pages}{805--816}.
\newblock \urlprefix\url{http://dl.acm.org/citation.cfm?id=1873667}.

\bibitemdeclare{inproceedings}{Sankowski2005}
\bibitem{Sankowski2005}
\bibinfo{author}{Piotr \surnamestart Sankowski\surnameend}
  (\bibinfo{year}{2005}): \emph{\bibinfo{title}{{Shortest Paths in Matrix
  Multiplication Time}}}.
\newblock In: {\sl \bibinfo{booktitle}{ESA}}, pp. \bibinfo{pages}{770--778},
  \doi{10.1007/11561071\_68}.

\bibitemdeclare{inproceedings}{Williams2012}
\bibitem{Williams2012}
\bibinfo{author}{Virginia~Vassilevska \surnamestart Williams\surnameend}
  (\bibinfo{year}{2012}): \emph{\bibinfo{title}{{Multiplying Matrices Faster
  Than Coppersmith-Winograd}}}.
\newblock In: {\sl \bibinfo{booktitle}{STOC}}, pp. \bibinfo{pages}{887--898},
  \doi{10.1145/2213977.2214056}.

\bibitemdeclare{inproceedings}{WilliamsW2010}
\bibitem{WilliamsW2010}
\bibinfo{author}{Virginia~Vassilevska \surnamestart Williams\surnameend} \&
  \bibinfo{author}{Ryan \surnamestart Williams\surnameend}
  (\bibinfo{year}{2010}): \emph{\bibinfo{title}{{Subcubic Equivalences between
  Path, Matrix and Triangle Problems}}}.
\newblock In: {\sl \bibinfo{booktitle}{FOCS}}, pp. \bibinfo{pages}{645--654},
  \doi{10.1109/FOCS.2010.67}.

\bibitemdeclare{article}{YoungTO1991}
\bibitem{YoungTO1991}
\bibinfo{author}{Neal~E. \surnamestart Young\surnameend},
  \bibinfo{author}{Robert~Endre \surnamestart Tarjan\surnameend} \&
  \bibinfo{author}{James~B. \surnamestart Orlin\surnameend}
  (\bibinfo{year}{1991}): \emph{\bibinfo{title}{{Faster Parametric Shortest
  Path and Minimum-Balance algorithms}}}.
\newblock {\sl \bibinfo{journal}{Networks}}
  \bibinfo{volume}{21}(\bibinfo{number}{2}), pp. \bibinfo{pages}{205--221},
  \doi{10.1002/net.3230210206}.

\bibitemdeclare{article}{Yuval1976}
\bibitem{Yuval1976}
\bibinfo{author}{Gideon \surnamestart Yuval\surnameend} (\bibinfo{year}{1976}):
  \emph{\bibinfo{title}{{An algorithm for finding all shortest paths using $
  N^{2.81} $ infinite-precision multiplications}}}.
\newblock {\sl \bibinfo{journal}{Information Processing Letters}}
  \bibinfo{volume}{4}(\bibinfo{number}{6}), pp. \bibinfo{pages}{155--156},
  \doi{10.1016/0020-0190(76)90085-5}.

\bibitemdeclare{article}{Zemel1987}
\bibitem{Zemel1987}
\bibinfo{author}{Eitan \surnamestart Zemel\surnameend} (\bibinfo{year}{1987}):
  \emph{\bibinfo{title}{{A Linear Time Randomizing Algorithm for Searching
  Ranked Functions}}}.
\newblock {\sl \bibinfo{journal}{Algorithmica}}
  \bibinfo{volume}{2}(\bibinfo{number}{1-4}), pp. \bibinfo{pages}{81--90},
  \doi{10.1007/BF01840350}.

\bibitemdeclare{article}{Zwick2002}
\bibitem{Zwick2002}
\bibinfo{author}{Uri \surnamestart Zwick\surnameend} (\bibinfo{year}{2002}):
  \emph{\bibinfo{title}{{All Pairs Shortest Paths using Bridging Sets and
  Rectangular Matrix Multiplication}}}.
\newblock {\sl \bibinfo{journal}{Journal of the ACM}}
  \bibinfo{volume}{49}(\bibinfo{number}{3}), pp. \bibinfo{pages}{289--317},
  \doi{10.1145/567112.567114}.
\newblock \bibinfo{note}{Announced at FOCS '98}.

\bibitemdeclare{article}{ZwickP1996}
\bibitem{ZwickP1996}
\bibinfo{author}{Uri \surnamestart Zwick\surnameend} \& \bibinfo{author}{Mike
  \surnamestart Paterson\surnameend} (\bibinfo{year}{1996}):
  \emph{\bibinfo{title}{{The complexity of mean payoff games on graphs}}}.
\newblock {\sl \bibinfo{journal}{Theoretical Computer Science}}
  \bibinfo{volume}{158}(\bibinfo{number}{1-2}), pp. \bibinfo{pages}{343--359},
  \doi{10.1016/0304-3975(95)00188-3}.
\newblock \bibinfo{note}{Announced at COCOON '95}.

\end{thebibliography}
\end{document}